\newtheorem{theorem}{Theorem}
\newtheorem{proposition}[theorem]{Proposition}
\newtheorem{definition}[theorem]{Definition}
\newtheorem*{definition*}{Definition}
\newtheorem{corollary}[theorem]{Corollary}
\newtheorem{remark}[theorem]{Remark}
\newtheorem{observation}[theorem]{Observation}
\newtheorem{lemma}[theorem]{Lemma}
\newcommand{\R}{\mathbb{R}}
\newcommand{\N}{\mathbb{N}}
\newcommand{\Q}{\mathbb{Q}}
\newcommand{\Z}{\mathbb{Z}}
\newcommand{\Dim}{\textup{Dim}}
\newcommand{\CE}{\text{CE}}
\title{Finite-State Dimension and The Davenport-Erd\H{o}s Theorem}
\author{Joe Clanin\footnote{Department of Computer Science, Iowa State University, \href{mailto:jsc@iastate.edu}{jsc@iastate.edu}. This author's research was supported in part by National Science Foundation research grant 1900716.} \and Matthew Rayman\footnote{Department of Computer Science, Iowa State University, \href{mailto:marayman@iastate.edu}{marayman@iastate.edu}. This author's research was supported in part by National Science Foundation research grant 1900716.}}
\date{}
\begin{document}

\maketitle

\begin{abstract}
A 1952 result of Davenport and Erd\H{o}s states that if $p$ is an integer-valued polynomial, then the real number $0.p(1)p(2)p(3)\dots$ is Borel normal in base ten. A later result of Nakai and Shiokawa extends this result to polynomials with arbitrary real coefficients and all bases $b\geq 2$. It is well-known that finite-state dimension, a finite-state effectivization of the classical Hausdorff dimension, characterizes the Borel normal sequences as precisely those sequences of finite-state dimension 1. For an infinite set of natural numbers, and a base $b\geq 2$, the base $b$ Copeland-Erd\H{o}s sequence of $A$, $CE_b(A)$, is the infinite sequence obtained by concatenating the base $b$ expressions of the numbers in $A$ in increasing order. In this work we investigate the possible relationships between the finite-state dimensions of $\CE_b(A)$ and $\CE_b(p(A))$ where $p$ is a polynomial. We show that, if the polynomial is permitted to have arbitrary real coefficients, then for any $s,s^\prime$ in the unit interval, there is a set $A$ of natural numbers and a linear polynomial $p$ so that the finite-state dimensions of $\CE_b(A)$ and $\CE_b(p(A))$ are $s$ and $s^\prime$ respectively. The corresponding result for strong finite-state dimension is also shown. We demonstrate that linear polynomials with rational coefficients do not change the finite-state dimension of any Copeland-Erd\H{o}s sequence, but there exist polynomials with rational coefficients of every larger integer degree that change the finite-state dimension of some sequence. We also prove the surprising fact that there exist sets $A$ and integer-valued monomials $p$ such that $CE_b(A)$ is normal, but $CE_b(p(A))$ has finite-state dimension strictly less than one.
\end{abstract}

\section{Introduction}

A real number $x$ is said to be Borel normal in base $b$ if the base-$b$ expansion of $x$ contains, for every positive integer $n$, each string of length $n$ over the $b$-ary alphabet with limiting density $b^{-n}$. While it has long been known that almost all real numbers are normal \cite{emile1909probabilites}, and it is conjectured \cite{bugeaud2012distribution} that every irrational algebraic number is absolutely normal (i.e. normal to every positive integer base), all known explicit examples of normal numbers have arisen through artificial construction \cite{becher2018normal}. The first such example, the base ten Champernowne constant, \[0.12345678910\dots\] obtained by concatenating the base ten representations of the positive integers was shown to be normal in 1933 \cite{champernowne1933construction}. A later work by Besicovitch \cite{besicovitch1935asymptotic} demonstrated the same result for the concatenations the square numbers in base ten. Copeland and Erd\H{o}s \cite{copeland1946note} further demonstrated this for the primes and derived a simple sufficient condition in terms of natural density for the concatenation, in increasing order, of an infinite set of natural numbers to be a normal sequence in any given positive integer base. 

The present work has its origins in a 1952 construction of Davenport and Erd\H{o}s. In \cite{davenport1952note}, they proved that if $p(x)$ is a non-constant polynomial taking only positive integer values on the positive integers, then the real number \[0.p(1)p(2)p(3)p(4)\dots\] is normal in base ten. Their result was extended by Nakai and Shiokawa \cite{nakai1990classII,nakai1990class,shiokawa1974remark} who showed that for any function of the form \[f(x)=\alpha_nx^\beta_n+\alpha_{n-1}x^\beta_{n-1}+\dots+\alpha_1x^\beta_1\] where $\beta_n>\beta_{n-1}>\dots>\beta_1\geq 0$ and $f(x)>0$ for $x>0$, the concatenation \[0.[f(1)]_b[f(2)]_b[f(3)]_b[f(4)]_b\dots\] where $[x]_b$ represents the integer part of $x$ in base $b$ is normal in base $b$. 

Finite-state dimension, first introduced by Dai, Lathrop, Lutz and Mayordomo \cite{dai2004finite} is an effectivized version of the classical Hausdorff dimension. The finite-state dimension of an infinite sequence $S$ (or real number), $\dim_{FS}(S)$, quantifies the lower asymptotic density of finite-state information contained in $S$. The dual notion of strong finite-state dimension, $\Dim_{FS}(S)$, is a finite-state effectivization of the classical packing dimension and correspondingly quantifies the upper asymptotic density of finite-state information in $S$. It is well established \cite{bourke2005entropy,schnorr1972endliche} that a sequence $S$ is normal if and only if $\dim_{FS}(S)=1$. In general, however, it is the case that $0\leq\dim_{FS}(S)\leq\Dim_{FS}(S)\leq 1$ and these two quantities may differ.

A direction for research first proposed by Jack Lutz in 2005 involves investigating the claim: Every theorem about Borel normality is the dimension-1 special case of a more interesting theorem about finite-state dimension. This claim, now known as the ``Normality/Dimension Thesis" has inspired a number of subsequent works. For example, the aforementioned results of Copeland and Erd\H{o}s \cite{copeland1946note} were extended in \cite{gu2005dimensions}, and the 1949 theorem of Wall \cite{wall1949normal} asserting the normality of $q+x$ and $qx$ for normal $x\in\R$ and $q\in\Q\setminus\{0\}$ was shown to hold for finite-state dimension in \cite{doty2007finite}. In this work, we seek to address the Davenport-Erd\H{o}s theorem and its subsequent generalization by Nakai and Shiokawa through the lens of this thesis. Existing literature concerning constructions like that of Davenport and Erd\H{o}s has sought to produce a normal number (and therefore a dimension-1 sequence) by applying a function to $\N$ or the set of primes. As both of these sets themselves give normal numbers when concatenated, we seek in the present work to investigate the finite-state dimension of the sequence $p(n_1)p(n_2)p(n_3)\dots$ when the sequence $n_1n_2n_3\dots$ is permitted to have a finite-state dimension other than 1, and $p$ is a polynomial.

The following section introduces our notational conventions and definitions. In Section 3 we address the case of polynomials with arbitrary real coefficients, in Section 4 we consider the case of polynomials with rational coefficients, and in section 5 we describe the behavior of a related quantity, zeta-dimension, under polynomial transformations.

\section{Preliminaries}

Let $\Sigma$ denote a finite alphabet, and for $b\geq 2$ let $\Sigma_b$ denote the $b$-ary alphabet $\{0,\dots,b-1\}$. The set of all finite strings over $\Sigma$ is denoted $\Sigma^*$. The set of strings in $\Sigma^*$ of length $l$ is denoted $\Sigma^l$, and $\Sigma^\infty$ denotes the set of infinite sequences over $\Sigma$. For any $w\in \Sigma^*\cup\Sigma^\infty$, $w[i\dots j]$ is the substring of $w$ beginning at the index $i$ and terminating at the index $j$.

Following the notation in \cite{gu2005dimensions}, for $A\subseteq \N$ and $b\geq 2$, the base $b$ Copeland-Erd\H{o}s sequence $\CE_b(A)$ is the infinite $b-$ary sequence obtained by concatenating the $b-$ary expressions of the numbers in $A$ in increasing order of their magnitude. For example, \[
\CE_2(\N)=011011100101\dots\text{ and }\CE_{10}(\text{PRIMES})=235711131719\dots.\] For any real number $x\in\R$, denote the base-$b$ expression of $x$ by $\sigma_b(x)$. For $f:\N\to\R$ and $A\subseteq \N$, define \[\CE_b(f(A))=\sigma_b\left([ f(n_1)]\right)\sigma_b\left([ f(n_2)]\right)\dots\] where $[\hspace{0.5em}]$ denotes the integer part and $n_1<n_2<\dots$ are the elements of $A$.

\begin{definition}(\cite{bourke2005entropy}) Let $w,s\in \Sigma^*$.
\begin{itemize}
    \item The number of sliding block occurences of $w$ in $s$ is the quantity \[N(w,s)=\left|\{i | s[i...i+|w|-1]=w\}\right|.\]
    \item The sliding count probability of $w$ in $s$ is \[P(w,s)=\frac{1}{|s|-|w|+1}N(w,s).\]
    \item For $l\in\N$ the $l^{th}$ sliding block entropy of $x\in\Sigma^*$ is
\[H_l(x)=\frac{1}{l}\sum_{w\in\Sigma^l}P(w,x)\log(P(w,x)^{-1}).\]
\end{itemize}
\end{definition}

Finite-state dimension admits many equivalent characterizations including finite-state gamblers and finite-state compressors \cite{dai2004finite}, finite-state log-loss predictors \cite{hitchcock2003fractal}, and automatic Kolmogorov complexity \cite{kozachinskiy2021automatic}. In this work, we utilize a characterization in terms of sliding block entropy. Disjoint block entropy was shown to be a characterization by Bourke, Hitchcock, and Vinodchandran \cite{bourke2005entropy} and Kozachinskiy and Shen \cite{kozachinskiy2019two} showed that the following sliding block version is equivalent.

\begin{definition}\label{fsd_definition}(\cite{kozachinskiy2019two},\cite{dai2004finite})
The finite-state dimension and finite-state strong dimension of a sequence $S\in\Sigma^\infty$ are given by
\[\dim_{FS}(S)=\inf_{l}\liminf_{n\to\infty}H_l(S[0\dots n])\hspace{1em}\text{ and }\hspace{1em}\Dim_{FS}(S)=\inf_{l}\limsup_{n\to\infty}H_l(S[0\dots n]).\]
\end{definition}

A sequence $S\in \Sigma_b^\infty$ (or equivalently, a real number) is said to be Borel normal in base $b$ if, for all $w\in \Sigma_b^*$,
\[\lim_{n\to\infty}\frac{N(w,S[0\dots n])}{n}=\frac{1}{b^{|w|}}.\]

It is well-known \cite{bourke2005entropy,schnorr1972endliche} that the set of normal sequences is precisely the set of sequences of finite-state dimension 1, and that sequences of all possible dimensions and strong dimensions exist. In particular, it has been shown \cite{gu2005dimensions} that for any $s,t\in[0,1]$ with $s<t$, there exists $A\subseteq \N$ such that \[\dim_{FS}(A)=s\text{ and }\Dim_{FS}(A)=t.\]

We will make repeated use of the following result, for which we provide a short proof.

\begin{proposition}\label{density_zero_lemma}
    Let $S,T\in \Sigma^\infty$. If $T$ can be obtained by inserting or deleting symbol in $S$ at a set of indices of upper asymptotic density zero, then $\dim_{FS}(S)=\dim_{FS}(T)$ and $\Dim_{FS}(S)=\Dim_{FS}(T)$.
\end{proposition}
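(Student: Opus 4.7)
The plan is to invoke the sliding block entropy characterization of $\dim_{FS}$ and $\Dim_{FS}$ just stated: it suffices to show that for every fixed block length $l$, the sequences $\bigl(H_l(S[0\dots n])\bigr)_n$ and $\bigl(H_l(T[0\dots n])\bigr)_n$ have the same liminf and the same limsup. The hypothesis is symmetric in $S$ and $T$, since the inverse of an insertion/deletion at a density-zero set of positions is again insertion/deletion at a density-zero set, so I will carry out the estimate in one direction and invoke symmetry for the other.

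The first step would be a block-count stability estimate: for any fixed $l\geq 1$ and any $w\in\Sigma^l$, a single insertion or deletion changes $N(w,\cdot)$ by at most $l$, because only the at most $l$ sliding windows of length $l$ that straddle the edited position can be affected. Letting $\iota(n)$ denote the number of editing operations that take place at positions $\leq n$, and $\varphi(n)$ the length of the prefix of $T$ produced by applying exactly these operations to $S[0\dots n]$, the density-zero hypothesis gives $\iota(n)=o(n)$ and hence $|\varphi(n)-n|\leq\iota(n)=o(n)$. Iterating the single-edit bound produces
\[
\bigl|N(w,S[0\dots n])-N(w,T[0\dots \varphi(n)])\bigr|\;\leq\;l\cdot\iota(n)\;=\;o(n),
\]
so after dividing by the respective prefix lengths, $P(w,S[0\dots n])-P(w,T[0\dots\varphi(n)])\to 0$ for every fixed $w$.

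Since the sum defining $H_l$ has only $|\Sigma|^l$ terms and $x\mapsto x\log(1/x)$ is uniformly continuous on $[0,1]$, this upgrades to $\bigl|H_l(S[0\dots n])-H_l(T[0\dots\varphi(n)])\bigr|\to 0$ as $n\to\infty$. Writing $a_n=H_l(S[0\dots n])$ and $b_n=H_l(T[0\dots n])$, any subsequence $n_k\to\infty$ realizing $\liminf_n a_n$ yields $b_{\varphi(n_k)}\to\liminf_n a_n$ with $\varphi(n_k)\to\infty$, so $\liminf_n b_n\leq\liminf_n a_n$. The symmetric argument, reversing the roles of $S$ and $T$, gives the reverse inequality, and the same reasoning handles limsups. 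Taking $\inf_l$ then yields equality of finite-state dimensions and strong dimensions.

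The main subtlety I anticipate is purely bookkeeping: specifying the correspondence $\varphi$ between prefix lengths precisely enough that the density-zero hypothesis transfers cleanly to $\iota(n)=o(n)$, and that the symmetric argument can be applied without ambiguity. Once this is set up correctly, the rest of the argument reduces to a routine uniform continuity estimate.
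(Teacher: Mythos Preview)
Your proposal is correct and follows essentially the same route as the paper: show that density-zero edits perturb each sliding block frequency $P(w,\cdot)$ by $o(1)$, upgrade this to $H_l$ via uniform continuity of $x\log(1/x)$, and conclude that the liminfs and limsups coincide. The paper's proof is terser---it simply asserts $|P(w,S[0\dots n])-P(w,T[0\dots n])|\to 0$ at the same index $n$ without introducing your correspondence $\varphi$---so your explicit edit-count bound $|N(w,\cdot)-N(w,\cdot)|\le l\cdot\iota(n)$ and the subsequence argument are exactly the details the paper leaves to the reader.
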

\begin{proof}
    Let $l\in\mathbb{N}$. For any $w\in \Sigma^*$,  as $n\to\infty$ we have that $|P(w,S[0\dots n])-P(w,T[0\dots n])|\to 0$ and thus also $|H_l(S[0\dots n])-H_l(T[0\dots n])|\to 0$. Therefore $\dim_{FS}(S)=\dim_{FS}(T)$ and $\Dim_{FS}(S)=\Dim_{FS}(T)$.
\end{proof}

\section{Polynomials with Arbitrary Real Coefficients}

In this section, we investigate the finite-state dimensions of $\CE_b(A)$ and $\CE_b(p(A))$ where $p$ is a polynomial with arbitrary real coefficients. We will use the following result based on the concavity of $H_l$ block entropies. Let $P_1$ and $P_2$ be two probability distributions over $\Sigma^l$ and $0\leq \lambda \leq 1$. Then
\[H_l(\lambda P_1 + (1-\lambda)P_2 )\geq  \lambda H_l(P_1)+(1-\lambda)H_l(P_2)\]
where $H_l(P)$ is defined in the natural way for a probability distribution instead of a string.
The proof is analogous to the proof for Shannon entropy. In particular, we have the following.

\begin{proposition}\label{concavity}
    Let $u,v\in \Sigma^*$ and $w=uv$. Then
\[H_l(w)\geq  \frac{|u|}{|w|}H_l(u)+\frac{|v|}{|w|}H_l(v).\]
\end{proposition}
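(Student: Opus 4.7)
The plan is to realize the sliding-count probability distribution $P(\cdot, w)$ on length-$l$ blocks as a convex combination of $P(\cdot, u)$, $P(\cdot, v)$, and a small boundary term, then invoke the concavity inequality for $H_l$ stated immediately above the proposition.

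First I would count sliding occurrences position by position. Every length-$l$ window of $w = uv$ either lies entirely inside $u$, lies entirely inside $v$, or straddles the boundary between them, and there are exactly $l-1$ windows of the last type. So for every $x \in \Sigma^l$ (assuming $|u|,|v| \geq l$),
\[
N(x, w) = N(x, u) + N(x, v) + n_B(x),
\]
where $n_B(x)$ is the number of the $l-1$ straddling windows equal to $x$, and $\sum_{x} n_B(x) = l-1$. Dividing through by $|w|-l+1$ expresses $P(\cdot, w)$ as a convex combination of the three probability distributions $P(\cdot, u)$, $P(\cdot, v)$, and the empirical boundary distribution $Q$, with coefficients $\alpha = (|u|-l+1)/(|w|-l+1)$, $\beta = (|v|-l+1)/(|w|-l+1)$, $\gamma = (l-1)/(|w|-l+1)$, satisfying $\alpha+\beta+\gamma = 1$.

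Second I would feed this convex combination into the $H_l$ concavity inequality from the preceding paragraph. Since $H_l(Q) \geq 0$, this yields
\[
H_l(w) \;\geq\; \alpha H_l(u) + \beta H_l(v) + \gamma H_l(Q) \;\geq\; \alpha H_l(u) + \beta H_l(v),
\]
which is already the inequality of the proposition, but with the concavity weights $\alpha, \beta$ in place of $|u|/|w|, |v|/|w|$.

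The main obstacle is the last step: reweighting from the natural concavity weights $(|u|-l+1)/(|w|-l+1)$ and $(|v|-l+1)/(|w|-l+1)$ to the cleaner target weights $|u|/|w|$ and $|v|/|w|$ stated in the proposition. The gap is of order $l/|w|$, and it is exactly what the boundary term $\gamma H_l(Q)$ must be used to absorb, together with the trivial bound $H_l \geq 0$. Apart from this boundary bookkeeping, the proof is essentially the Shannon-entropy concavity argument that the paper alludes to in the paragraph preceding the statement.
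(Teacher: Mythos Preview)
Your decomposition and appeal to concavity match the paper's intent---its entire proof is the one line ``let $\lambda=|u|/|w|$ and apply the concavity inequality''---but you are right to be uneasy about the final reweighting step, and in fact it cannot be completed as you hope. Both natural weights satisfy $\alpha<|u|/|w|$ and $\beta<|v|/|w|$ (their sum is $1-\gamma<1$), so passing from $\alpha H_l(u)+\beta H_l(v)$ to $\frac{|u|}{|w|}H_l(u)+\frac{|v|}{|w|}H_l(v)$ requires $\gamma H_l(Q)$ to make up a positive shortfall. But $H_l(Q)$ can be $0$ (all $l-1$ straddling windows could coincide), so the boundary term need not absorb anything.

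Concretely, the inequality as stated is false. Take $\Sigma=\{0,1\}$, $l=2$, and $u=v=00110$. Each of $u,v$ has the four length-$2$ blocks appearing once each among its four windows, so $H_2(u)=H_2(v)=1$ and the right-hand side equals $1$. But $w=uv=0011000110$ has nine windows with block counts $(3,2,2,2)$, giving $H_2(w)=\tfrac{1}{2}\bigl(\tfrac{1}{3}\log_2 3+\tfrac{2}{3}\log_2\tfrac{9}{2}\bigr)\approx 0.988<1$. What your argument genuinely proves is the version with weights $\frac{|u|-l+1}{|w|-l+1}$ and $\frac{|v|-l+1}{|w|-l+1}$, which differs from the stated form by $O(l/|w|)$. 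That discrepancy is harmless in every place the paper invokes the proposition (always with $|u|,|v|\to\infty$ for fixed $l$), so the clean weights are a convenient asymptotic abuse; the paper's one-line proof simply glosses over the boundary bookkeeping you correctly flagged.
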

\begin{proof}
    Let $\lambda=\frac{|u|}{|w|}$ and $P_1, P_2$ be the probability distributions corresponding to the frequencies in the string $u$ and $v$ respectively.
\end{proof}
In our constructions we will concatenate prefixes of a given sequence. The following result allows us to do this without changing the dimension for sufficiently fast growing prefixes.
\begin{proposition}
Let $S\in \Sigma^\infty$ with $\dim_{FS}(S)=s$. Then there exists $n_1<n_2<n_3<...\in \mathbb{N}$ such that $T=S[0...n_1]S[0...n_2]S[0...n_3]...$ has $\dim_{FS}(T)=s$.
\end{proposition}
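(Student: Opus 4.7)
My plan is to build $n_1 < n_2 < \dots$ inductively, imposing two requirements at each step. First, a fast-growth condition: writing $N_i = \sum_{j \leq i}(n_j + 1)$ for the length of $T$ after $i$ concatenated prefixes, I will demand $n_i > i \cdot N_{i-1}$ so that each new piece dwarfs everything written earlier, making $N_{i-1}/N_i \to 0$. Second, a diagonal condition to witness the characterization of $\dim_{FS}(S) = s$: fix an enumeration $(l_1, k_1), (l_2, k_2), \dots$ of $\mathbb{N} \times \mathbb{N}$ in which every pair occurs infinitely often, and choose $n_i$ also satisfying
\[
H_{l_i}(S[0 \dots n_i]) \leq \liminf_{m \to \infty} H_{l_i}(S[0 \dots m]) + \frac{1}{k_i + 1}.
\]
Such an $n_i$ exists because the set realizing the inequality is infinite by definition of $\liminf$, and it intersects every cofinite tail.

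For the lower bound $\dim_{FS}(T) \geq s$, I fix $l$ and $\epsilon > 0$. Because $\dim_{FS}(S) = s$ forces $\liminf_m H_l(S[0 \dots m]) \geq s$, all but finitely many pieces $S[0 \dots n_j]$ satisfy $H_l(S[0 \dots n_j]) \geq s - \epsilon$. For a general index $n$, I will decompose $T[0 \dots n]$ into its complete prefix-pieces $S[0 \dots n_1], \dots, S[0 \dots n_{i-1}]$ together with an incomplete final prefix $S[0 \dots k]$, and iterate Proposition \ref{concavity} to lower bound $H_l(T[0 \dots n])$ by the length-weighted average of the pieces' $H_l$ values. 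The finitely many ``bad'' early pieces and a possibly-short final piece together contribute bounded total length, so as $n \to \infty$ the weighted average approaches $s - \epsilon$; since $\epsilon$ was arbitrary, $\liminf_n H_l(T[0 \dots n]) \geq s$.

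For the upper bound $\dim_{FS}(T) \leq s$, I again fix $l$ and use that, by the diagonal construction, there are infinitely many $i$ with $l_i = l$ (one for each value of $k_i$). For each such $i$, the prefix $T[0 \dots N_i]$ is the concatenation of the initial $N_{i-1}$ symbols (which, by the growth condition, take up at most a $1/(i+1)$ fraction of $N_i$) with the long piece $S[0 \dots n_i]$. The sliding-block frequencies of length-$l$ strings in $T[0 \dots N_i]$ therefore differ from those in $S[0 \dots n_i]$ only by an additive $O(N_{i-1}/N_i + l/N_i)$ coming from earlier positions and from blocks straddling the concatenation point. Uniform continuity of the entropy functional on the probability simplex over $\Sigma^l$ then yields $|H_l(T[0 \dots N_i]) - H_l(S[0 \dots n_i])| \to 0$ along this subsequence, whence $\liminf_n H_l(T[0 \dots n]) \leq \liminf_m H_l(S[0 \dots m])$. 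Taking the infimum over $l$ gives $\dim_{FS}(T) \leq \dim_{FS}(S) = s$.

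The main technical obstacle is precisely the continuity step in the upper bound. The distribution-level perturbation bound is straightforward, since at most $N_{i-1} + l - 1$ sliding positions in $T[0 \dots N_i]$ can disagree with corresponding positions in $S[0 \dots n_i]$. Passing from closeness of block distributions to closeness of $H_l$ uses uniform continuity of $x \mapsto -x \log x$ on $[0,1]$, with a modulus depending on the alphabet size $|\Sigma|^l$. This is exactly why the argument must be carried out one $l$ at a time, and why the growth condition has to be chosen quantitatively enough that the boundary and prefix contributions vanish faster than the continuity modulus for each fixed $l$ along the relevant subsequence.
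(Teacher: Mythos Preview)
Your proposal is correct and follows essentially the same strategy as the paper: an inductive construction of the $n_i$ combining a fast-growth condition with a diagonal choice over block lengths, the lower bound via iterated concavity (Proposition~\ref{concavity}), and the upper bound via the negligibility of the earlier pieces. The only cosmetic difference is that the paper bakes the upper-bound requirement directly into the choice of $n_i$ by demanding $H_{a_i}(S_{i-1}S[0\dots n_i])\le\liminf_k H_{a_i}(S[0\dots k])+2^{-i}$ for the concatenated prefix, whereas you impose the condition on $S[0\dots n_i]$ alone and then transfer it to $T[0\dots N_i]$ via uniform continuity of the entropy functional; both routes exploit the same underlying fact that $N_{i-1}/N_i\to 0$.
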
\label{prefix_lemma}
\begin{proof}
    Let $(a_n)_{n\in \mathbb{N}}=(1,1,2,1,2,3,...)$. Define $n_i$ inductively as follows at stage $i$:
    \begin{itemize}
        \item Let $S_{i-1}=S[0...n_1]...S[0...n_{i-1}]$. 
        \item For $l<i$ let $s_l$ be such that $H_l(S[0...k])\geq \liminf_{k\rightarrow \infty} H_l(S[0...k])-2^{-i}$ for all $k>s_l$. Let $t_l$ be such that ${H_l(S_{i-1}S[0...k])\geq \liminf_{k\rightarrow \infty} H_l(S[0...k])-2^{-i-1}}$ for all $k>t_l$. 
        \item Let $k_l=(i)(t_l)(s_l)$.
        \item Pick $n_i>\max\{n_{i-1},k_l\vert l\leq i\}$ with \[{H_{a_i}(S_{i-1}S[0...n_i])\leq \liminf_{k\rightarrow \infty} H_{a_i}(S[0...k])+2^{-i}}.\]
    \end{itemize}
    \par
    By the last criterion, for each $l$ there are infinitely many $k$ such that $H_l(T[0...k])\leq \liminf_{k\rightarrow \infty} H_{l}(S[0...k])+2^{-i}$ for every $i$ so 
    \[\liminf_{k\rightarrow \infty} H_{l}(T[0...k])\leq \liminf_{k\rightarrow \infty} H_{l}(S[0...k]). \]
    \par 
    By Proposition \ref{concavity}, for each $i$ 
    \[H_l(S_{i-1}S[0...k])\geq \frac{|S_{i-1}|}{|S_{i-1}|+k}H_l(S_{i-1})+\frac{k}{|S_{i-1}|+k}H_l(S[0...k]).\]
    For $k<s_l$  we have
    \[H_l(S_{i-1}(S[0...k])\geq \frac{i-1}{i}(\liminf_{k\rightarrow \infty} H_l(S[0...k])-2^{-i})\]
    and for $k>s_l$,
    \[H_l(S_{i-1}(S[0...k])\geq \min\{H_l(S_{i-1}), H_l(S[0...k])\}\geq \liminf_{k\rightarrow \infty} H_l(S[0...k])-2^{-i}).\]
    So for all $k$,
    \[H_l(S_{i-1}(S[0...k])\geq \frac{i-1}{i}(\liminf_{k\rightarrow \infty} H_l(S[0...k])-2^{-i}).\]
    Thus, 
    \[\liminf_{k\rightarrow \infty} H_{l}(T[0...k])\geq \liminf_{k\rightarrow \infty} H_{l}(S[0...k]) \]
    and hence,
    \[\liminf_{k\rightarrow \infty} H_{l}(T[0...k])= \liminf_{k\rightarrow \infty} H_{l}(S[0...k]). \]
\end{proof}

The above result also holds for strong finite-state dimension, however the proof requires the use of a different characterization of finite-state dimension than what is given by definition~\ref{fsd_definition}. The alternative characterization is given in terms of betting on sequences using a finite-state machine called a {\it finite-state gambler}.  We refer the reader to \cite{athreya2007effective} and \cite{dai2004finite} for the relevant definitions regarding finite-state gamblers.

\begin{proposition}\label{strongprefixlemma}
    Let $S\in \Sigma^\infty$ with $\Dim_{FS}(S)=t$. Then there exists $n_1<n_2<n_3<...\in \mathbb{N}$ such that $T=S[0...n_1]S[0...n_2]S[0...n_3]...$ has $\Dim_{FS}(T)=t$.
\end{proposition}
\begin{proof}
    Let $(a_n)_{n\in \mathbb{N}}=(1,1,2,1,2,3,...)$ and let $(s_n)_{n=0}^\infty$ be a sequence of rationals approaching $t$ from above. Since each $t_n<\Dim_{FS}(S)$ there is a finite-state gambler $d_n$ that $t_n$-strongly succeeds on $s$. We inductively define $n_i$ as follows.
    \begin{itemize}
        \item For $j\leq i$ let $r_i\in \mathbb{N}$ be least length such that the state of $d_j$ after processing $S[0\dots n_{i-1}]$ is the same as after processing $S[0\dots r_i]$. Define $m_j$ such that $d(S[0\dots r_i])\leq 2d(S[0\dots m_j])$ for all $m\geq m_j$.
        \item Let $N_1= \max\{m_j \mid j\leq i\}$.
        \item Let $N_2$ be such that 
        \[{H_{a_i}(S[0\dots n_1]\dots S[0\dots n_{i-1}]S[0\dots N_2])\geq \limsup_{k\rightarrow \infty} H_{a_i}(S[0...k])-2^{-i}}.\]
        \item Choose $n_i=\max\{N_1,N_2,n_{i-1}\}.$
    \end{itemize}
    Note that the values of $N_2$ above ensure that $\Dim_{FS}(T)\geq \Dim_{FS}(S)=t$. Therefore it suffices to prove that $\Dim_{FS}(T)\leq t$. To do this we will show that $\Dim_{FS}(T)\leq s_n$ for every $s_n$. Note that there are finitely many states and for each the corresponding $r$ defined in item 1 above has constant length. Hence for the sequence
    \[T'=S[0\dots n_1]S[r_1\dots n_2]\dots\]
    we get that $\Dim_{FS}(T')=\Dim_{FS}(T)$ by Proposition \ref{density_zero_lemma}. 
    Therefore the proof follows if we show that the gambler $d_n$ $t_n$-strongly succeeds on $T'$. 

    To see this let $q=d_n(S[0\dots n_1]\dots S[r_{n-2}\dots n_{n-1}])>0$. The choice of $m_n$ guarantees that $d_n$ will have at least twice its value after processing every added prefix of $S$ after this point. After $k$ prefixes have been added we then have 
    \[d(S[0\dots n_1]S[r_{n+k-1}\dots n_{n+k}]S[r_{n+k} \dots m])\geq q2^kd(S[0\dots m])\]
     for all $m\geq r_{n+k}$ and thus $d_n$ $t_n$-strongly succeeds on $T'$.
\end{proof}

\begin{observation}
    Given finitely many sequences $S_1,S_2,\dots S_n$, a single sequence of naturals $n_1<n_2<n_3\dots \in \mathbb{N}$ exists such that Lemma \ref{prefix_lemma} holds for each $S_i$.
\end{observation}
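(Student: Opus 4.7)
The plan is to adapt the stage-by-stage construction in the proof of Proposition~\ref{prefix_lemma} to the $n$ sequences simultaneously. That construction uses two kinds of conditions at each stage $i$: \emph{threshold conditions} that force $n_i$ to exceed various bounds $s_l$, $t_l$, $k_l$ depending on the sequence, and a single \emph{liminf-approach condition} of the form $H_{a_i}(S_{i-1}S[0\dots n_i]) \leq \liminf_k H_{a_i}(S[0\dots k]) + 2^{-i}$. The threshold conditions pose no difficulty when several sequences are present: at each stage they amount to requiring $n_i$ to exceed only finitely many quantities, so I would simply take a maximum of the analogous thresholds over all $j \leq n$ and all $l \leq i$.

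The subtlety lies in the liminf-approach condition, which in general cannot be enforced simultaneously for all $j$ at the same $n_i$. To circumvent this, I would replace the enumeration $(a_i) = (1, 1, 2, 1, 2, 3, \dots)$ of Proposition~\ref{prefix_lemma} by an enumeration $((j_i, a_i))_{i \in \N}$ of $\{1, \dots, n\} \times \N$ in which each pair occurs infinitely often, and at stage $i$ impose the liminf-approach condition only for the single sequence $S_{j_i}$ with block length $a_i$. The existence of a suitable $n_i$ then follows by exactly the same reasoning as in the original proof, since the finite prefix of concatenated blocks already selected from $S_{j_i}$ does not significantly perturb the entropy for large $n_i$ by the concavity estimate of Proposition~\ref{concavity}.

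To close the argument, I would verify that this single sequence $(n_i)$ works for each $S_j$ simultaneously. Fix $j$ and set $T_j = S_j[0\dots n_1]S_j[0\dots n_2]\cdots$. The threshold conditions together with the concavity estimate give $\liminf_k H_l(T_j[0\dots k]) \geq \liminf_k H_l(S_j[0\dots k])$ for every $l$, and the fact that each pair $(j, l)$ is visited infinitely often in the enumeration yields the matching reverse inequality, so $\dim_{FS}(T_j) = \dim_{FS}(S_j)$. The main obstacle is precisely the simultaneity issue just described, resolved by diagonal interleaving of the sequences within a single enumeration; all remaining details track those of Proposition~\ref{prefix_lemma} essentially unchanged.
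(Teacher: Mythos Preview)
Your proposal is correct and follows the same overall strategy as the paper --- adapt the stage-by-stage construction of Proposition~\ref{prefix_lemma} to handle all $n$ sequences at once --- but you are considerably more explicit where the paper is terse. The paper's entire proof is the single sentence ``use the same construction and choose $n_i$ to be the max of those obtained for each sequence,'' which covers the threshold conditions exactly as you describe but glosses over the liminf-approach condition. Your observation that this condition cannot in general be enforced simultaneously for all $S_j$ at the same index, and your resolution by diagonalizing over pairs $(j,l)$ so that only one sequence is targeted per stage, is a genuine refinement that makes the argument watertight; the paper's one-liner leaves this point to the reader.
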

\begin{proof}
    Use the same construction as the proof of Proposition \ref{prefix_lemma} and choose $n_i$ to be the max of those obtained for each sequence.
\end{proof}

It is natural to wonder what happens for concatenations of prefixes which are not of fast-growing length. The following result gives an answer to this by looking at concatenating every single prefix. In particular, note that for any normal sequence $S$ the sequence $T$ defined below is normal.

\begin{proposition}\label{allprefixesproposition}\footnote{The proof of this proposition is located in the appendix}
    Let $S\in \Sigma^\infty$ and $T=S[0]S[0...1]S[0...2]...$ Then $\dim_{FS}(S)\leq \dim_{FS}(T)$.
\end{proposition}

\begin{remark}
    It is straightforward to construct, for each $b\geq 2$, an $S\in\Sigma_b^\infty$ and $n_1<n_2<n_3<\dots\in\N$ so that \[\dim_{FS}(S)\lneqq\dim_{FS}(S[0\dots n_0]S[0\dots n_1]S[0\dots n_2]\dots)\] by dilution of a normal sequence (\cite{dai2004finite}, construction 6.4) and careful choice of prefix lengths.
\end{remark}

\begin{proposition}\label{linear_polynomial_result}
    For any $b\geq 2$ and $s,s^\prime,t,t^\prime \in [0,1]$ with $s\leq t$ and $s^\prime \leq  t^\prime$ there exists an infinite set $A\subseteq \N$ and a linear polynomial $p(x)$ so that
    \[\dim_{FS}(\CE_b(A))=s,\hspace{0.5em}\Dim_{FS}(\CE_b(A))=t\]
    and \[\dim_{FS}(\CE_b(p(A))=s^\prime,\hspace{0.5em}\dim_{FS}(\CE_b(p(A))=t^\prime\]
    \end{proposition}
\begin{proof}
    Let $\alpha\in\R$ be such that $\dim_{FS}(\sigma_b(\alpha))=s$ and $\Dim_{FS}(\sigma_b(\alpha))=t$. Take $c\in\R$ so that $c\alpha$ has $\dim_{FS}(\sigma_b(c\alpha))=s^\prime$ and $\Dim_{FS}(\sigma_b(c\alpha))=t^\prime$. By Propositions~\ref{prefix_lemma},\ref{strongprefixlemma}, we can choose $n_1<n_2<\dots\in \N$ sufficiently fast increasing so that $p(x)=cx$ and $A=\{\sigma_b(\alpha)[0\dots n_i] | i\in \N\}$ yield the result noting that one can choose the max of the two corresponding values of $n_i$ and construct the sequence simultaneously meeting all conditions in the proof of the propositions. As the set of indices at which the base $b$ representations of each product $cn$ for $n\in A$ disagree with the corresponding prefix of $c\alpha$ form a set of asymptotic density zero, we have $\dim_{FS}(\CE_b(p(A))=\dim_{FS}(c\alpha)$ and $\Dim_{FS}(\CE_b(p(A))=\Dim_{FS}(c\alpha)$.
\end{proof}

The above result relies critically on our ability to encode information into the irrational coefficient $c$ in the linear polynomial. In the following section, we investigate the behavior of polynomials with rational coefficients.

\section{Polynomials with Rational Coefficients}

\begin{proposition}\label{zero-blocks}
    Let $S$ be a normal sequence and $S^0$ be a sequence obtained by inserting strings in $\{0\}^*$ such that the number of strings inserted to $S$ has upper asymptotic density zero relative to the number of digits in the prefix. Let $z(n)$ be the number of zeroes added into a prefix of $S$ when viewed as a prefix of $S^0$ of length $n$. Then 
    \[\dim_{FS}(S^0)=1-\limsup_{n\to\infty}\frac{z(n)}{n}\]
    and 
    \[\Dim_{FS}(S^0)=1-\liminf_{n\to\infty}\frac{z(n)}{n}\]
\end{proposition}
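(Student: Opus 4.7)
The plan is to compute the block frequencies in $S^0[0\dots n]$ explicitly, reduce the entropy calculation to a one-parameter family $H_l^*(\rho)$, and then exploit the $l\to\infty$ behavior of this family. Throughout, set $\rho_n := z(n)/n$.

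Fix $l\in\N$. The prefix $S^0[0\dots n]$ is a copy of $S[0\dots n-z(n)]$ interleaved with inserted runs of zeros of total length $z(n)$, distributed across $o(n)$ separate insertion points. For each $w\in\Sigma^l$, the occurrences counted by $N(w,S^0[0\dots n])$ split into three types: those lying entirely inside the $S$-portion, those lying entirely inside an inserted zero run, and those straddling a boundary. The straddling type contributes at most $O(l)$ occurrences per insertion, hence $o(n)$ overall. The second type contributes $z(n)+o(n)$ copies of $0^l$ and none of any other $w$. By normality of $S$, the first type contributes $b^{-l}(n-z(n))+o(n)$ copies of each $w$. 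Dividing by $n-l+1$ gives
\[P(0^l,S^0[0\dots n])=\rho_n+(1-\rho_n)b^{-l}+o(1),\qquad P(w,S^0[0\dots n])=(1-\rho_n)b^{-l}+o(1)\text{ for }w\neq 0^l.\]

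Let $H_l^*(\rho)$ denote the $l$-th block entropy (using $\log$ base $b$) of the distribution on $\Sigma^l$ assigning mass $\rho+(1-\rho)b^{-l}$ to $0^l$ and $(1-\rho)b^{-l}$ to every other string. Continuity of $p\mapsto p\log(1/p)$ then yields $H_l(S^0[0\dots n])=H_l^*(\rho_n)+o(1)$. Two properties of $H_l^*$ carry the rest of the argument. First, by viewing the distribution as the mixture $\rho\,\delta_{0^l}+(1-\rho)U_{\Sigma^l}$ and applying concavity of entropy, $l\cdot H_l^*(\rho)\geq \rho\cdot 0+(1-\rho)\cdot l$, so $H_l^*(\rho)\geq 1-\rho$. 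Second, a direct expansion of the entropy formula gives $H_l^*(\rho)\to 1-\rho$ as $l\to\infty$. Together with the immediate continuity and monotone decrease of $H_l^*$ in $\rho$, these facts yield $\inf_l H_l^*(\rho)=1-\rho$ and
\[\liminf_n H_l^*(\rho_n)=H_l^*(\limsup_n \rho_n),\qquad \limsup_n H_l^*(\rho_n)=H_l^*(\liminf_n \rho_n).\]

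Taking $\inf_l\liminf_n$ of the identity $H_l(S^0[0\dots n])=H_l^*(\rho_n)+o(1)$ produces $\dim_{FS}(S^0)=\inf_l H_l^*(\limsup_n \rho_n)=1-\limsup_n z(n)/n$, and the parallel computation with $\inf_l\limsup_n$ gives $\Dim_{FS}(S^0)=1-\liminf_n z(n)/n$. The main obstacle is the bookkeeping in the first step: one has to verify that the $o(n)$ straddling bound really is uniform and that the normality of $S$ supplies the needed frequency estimates on $S[0\dots n-z(n)]$ even when that prefix grows slowly. The degenerate case $\rho_n\to 1$, in which the $S$-portion may fail to grow, is handled directly since then $S^0$ is eventually constant and both sides of the claimed identities collapse to zero.
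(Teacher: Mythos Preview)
Your proof is correct and follows essentially the same approach as the paper's: compute the length-$l$ block frequencies in $S^0[0\dots n]$ by separating contributions from the $S$-portion, the zero runs, and the negligible straddling windows, then analyze the resulting one-parameter entropy as $l\to\infty$. You supply more detail than the paper (the explicit concavity lower bound $H_l^*(\rho)\geq 1-\rho$, the monotonicity-in-$\rho$ argument for swapping $\liminf$/$\limsup$ with $H_l^*$, and the degenerate case $\rho_n\to 1$), but the architecture is the same.
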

\begin{proof}
    Let $l\in \mathbb{N}$. Then the number of sliding blocks over the original parts of $S$ that are impacted by the additions of 0's has asymptotic density zero. Hence, in the limit the probability distribution of length $l$ strings is equivalent to the weighted sum of the probability distribution $P_{0,l}$ of $S^0$ on the sliding blocks over only the added zero bits and the probability distribution $P_{S,l}$ on the original bits present in $S$. In particular, letting $P_{S^0,l,n}$ be the probability distribution of length $l$ strings on a prefix of $S^0$ of length $n$ we have 
\[P_{S^0,l,n}(0^l)=\frac{z(n)}{n}+\frac{n-z(n)}{n}P_{S,l}(0^l)\]
and 
\[P_{S^0,l,n}(w)=\frac{n-z(n)}{n}P_{S,l}(w)\]
for all other $w\in \Sigma^l$ as $n$ goes to infinity.
Since $S$ is normal we have
\[P_{S^0,l,n}(0^l)=\frac{z(n)}{n}+\frac{n-z(n)}{n}\frac{1}{b^l}\]
and
\[P_{S^0,l,n}(w)=\frac{n-z(n)}{n}\frac{1}{b^l}.\]
as $n$ goes to infinity. Therefore, 
\begin{align}
\liminf_{n\to \infty} H_l(S^0[0\dots n]) &=\liminf_{n\to \infty} -\frac{1}{l}\biggl[\left(\frac{z(n)}{n}+\frac{n-z(n)}{n}\frac{1}{b^l}\right)\log\left(\frac{z(n)}{n}+\frac{n-z(n)}{n}\frac{1}{b^l}\right)\nonumber \\
  &+(b^l-1)\left(\frac{n-z(n)}{n}\frac{1}{b^l}\right)\log\left(\frac{n-z(n)}{n}\frac{1}{b^l}\right)\biggr]\nonumber
\end{align}
It is routine to verify that this is non-increasing with $l$ and converges to \\${1-\limsup_{n\to\infty}{\frac{z(n)}{n}}}$ as $l$ goes to infinity. The $\limsup$ result is analogous. 
\end{proof}
Note that the $l$-block entropy in the weighted distribution is minimized when the probability distribution of the inserted strings has $l$-block entropy zero (e.g. contains only one string of length $l$). Therefore we get the following as a corollary.
\begin{corollary}\label{garbage-blocks}
    Let $S$ be a normal sequence and $S'$ be a sequence obtained by inserting arbitrary strings such that the number of strings inserted to $S$ has upper asymptotic density zero relative to the number of digits in the prefix. Let $z'(n)$ be the number of digits added into a prefix of $S$ when viewed as a prefix of $S'$ of length $n$. Then 
    \[\dim_{FS}(S')\geq1-\limsup_{n\to\infty}\frac{z^\prime(n)}{n}\]
    and 
    \[\Dim_{FS}(S')\geq1-\liminf_{n\to\infty}\frac{z^\prime(n)}{n}\]
\end{corollary}

In \cite{doty2007finite}, it is shown that for any rational number $q$, and real number $x$, the numbers $x, q+x,$ and $qx$ have equal finite-state dimensions and equal finite-state strong dimensions. We now utilize this result to show that linear polynomials with rational coefficients do not change the finite-state dimension nor finite-state strong dimension of Copeland-Erd\H{o}s sequences.

\begin{proposition}
    Let $p(x)=qx+r$ with $q,r\in \Q$, $q\neq 0$. For all $b\geq 2$ and $A\subseteq \N$,
    \[\dim_{FS}(\textup{CE}_b(A))=\dim_{FS}(\textup{CE}_b(p(A)))\] and
    \[\Dim_{FS}(\textup{CE}_b(A))=\Dim_{FS}(\textup{CE}_b(p(A))).\]
\end{proposition}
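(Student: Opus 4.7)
The plan is to reduce the proposition to the Doty--Lutz--Nandakumar result that $\dim_{FS}(x)=\dim_{FS}(qx)=\dim_{FS}(q+x)$ for $x\in\R$ and $q\in\Q$, by viewing suitably padded versions of $\CE_b(A)$ and $\CE_b(p(A))$ as the base-$b$ expansions of real numbers and connecting them via multiplication by $q$. Throughout I assume $A$ is infinite (otherwise both sequences are finite and the statement is trivial) and that $q>0$ (otherwise $p(n)>0$ for only finitely many $n\in A$).

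First I reduce to the case $r=0$ via Proposition~\ref{density_zero_lemma}: the integer $[qn+r]$ differs from $[qn]+\lfloor r\rfloor$ by at most $1$ depending on fractional parts, so $\sigma_b([qn+r])$ and $\sigma_b([qn]+\lfloor r\rfloor)$ disagree in only $O(1)$ trailing positions per block, and since $|\sigma_b(n_i)|\to\infty$ the set of modified positions has upper density zero; a similar density-zero argument absorbs the residual integer shift by $\lfloor r\rfloor$. Writing $q=a/c\in\Q^+$, let $L_i=|\sigma_b(n_i)|$ and choose a constant $C=C(b,c)$ bounding both $\lceil|\log_b q|\rceil$ and the preperiod-plus-period of the base-$b$ expansion of any fraction with denominator $c$. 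Set $L'_i=\max(L_i,|\sigma_b([qn_i])|)+C$ and let $y_A$, $y_{p(A)}$ denote $\CE_b(A)$, $\CE_b(p(A))$ padded so that block $i$ has length $L'_i$ by prepending zeros. Since $L'_i-L_i$ is bounded while $L_i\to\infty$, Proposition~\ref{density_zero_lemma} gives $\dim_{FS}(y_A)=\dim_{FS}(\CE_b(A))$ and $\dim_{FS}(y_{p(A)})=\dim_{FS}(\CE_b(p(A)))$, with analogous identities for $\Dim_{FS}$.

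Finally, view $\tilde y_A=0.y_A\in\R$ and write $P'_i=L'_1+\cdots+L'_i$. Then $q\tilde y_A=\sum_i(qn_i)b^{-P'_i}=\sum_i[qn_i]b^{-P'_i}+\sum_i\{qn_i\}b^{-P'_i}$; by the choice of $L'_i$ the first sum has base-$b$ expansion exactly equal to $y_{p(A)}$, and each remainder $\{qn_i\}=(an_i\bmod c)/c$ has an eventually periodic base-$b$ expansion of period at most $C$. The plan is to show that the aggregate fractional contribution, once carries are resolved, alters only an upper-density-zero set of positions of $y_{p(A)}$; combined with $\dim_{FS}(\tilde y_A)=\dim_{FS}(q\tilde y_A)$ from Doty--Lutz--Nandakumar and Proposition~\ref{density_zero_lemma}, this yields $\dim_{FS}(\CE_b(A))=\dim_{FS}(\CE_b(p(A)))$, with an identical argument handling $\Dim_{FS}$. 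The main obstacle is rigorously controlling this aggregate contribution: although each $\{qn_i\}$ is a rational of bounded period, its base-$b$ digits persist indefinitely past block $i$, so at each position $P$ in the expansion of $q\tilde y_A$ the contributions from all earlier blocks overlap, and one must verify, most plausibly by exploiting the $C$-zero buffer at each padded block boundary together with averaging over the residues $an_i\bmod c$, that the resulting non-zero digits and propagated carries do not create a positive-density discrepancy from $y_{p(A)}$.
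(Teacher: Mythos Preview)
Your overall strategy---pad each block so the Copeland--Erd\H{o}s sequence can be read as the base-$b$ expansion of a real number, then invoke the Doty--Lutz--Nandakumar theorem---is exactly the paper's. The gap is the step you yourself flag as ``the main obstacle'': controlling the aggregate contribution of the fractional parts $\{qn_i\}$. This is not a technicality. For non-integer $q$ each $\{qn_i\}$ contributes an infinite periodic tail past position $P'_i$, so at every later position you are summing contributions from \emph{all} earlier blocks; the resulting real number $\epsilon=\sum_i\{qn_i\}b^{-P'_i}$ need not be rational, and adding a fixed real number to $\tilde y_{p(A)}$ can in general alter a positive density of digits. Your suggested fix---a bounded zero buffer plus ``averaging over the residues $an_i\bmod c$''---cannot work as stated, because $A$ is an arbitrary subset of $\N$ and nothing forces the residues $an_i\bmod c$ to equidistribute or even to vary at all; a constant buffer of length $C$ also cannot absorb carries whose source is an unbounded number of overlapping tails.

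The paper sidesteps this entirely by not attacking general $q$ in one shot. For \emph{integer} $k$ it pads block $i$ with exactly $j_i=\big||\sigma_b(kn_i)|-|\sigma_b(n_i)|\big|$ leading zeros, so that the real number $x$ with expansion $0.0^{j_1}\sigma_b(n_1)0^{j_2}\sigma_b(n_2)\cdots$ satisfies $kx=0.\sigma_b(kn_1)\sigma_b(kn_2)\cdots$ \emph{exactly}: each $kn_i$ is an integer whose base-$b$ representation fills its padded slot with no remainder, so there are no fractional tails and no inter-block carries to control. Then $\dim_{FS}(x)=\dim_{FS}(kx)$ by Doty--Lutz--Nandakumar, and Proposition~\ref{density_zero_lemma} finishes. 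The case $q=1/b$ is handled separately by first rounding each $n_i$ to the nearest multiple of $b$ (a density-zero change), after which division by $b$ simply strips a trailing zero from each block. The idea you are missing is precisely this decomposition: restricting the multiplier to an integer makes the carry problem vanish rather than leaving you to bound it.
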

\begin{proof}
    First note that $\dim_{FS}(\CE_b(A))=\dim_{FS}(\CE_b(A+r))$ for any $r\in\Q$ by Proposition~\ref{density_zero_lemma}. We now consider the case that $q$ is an integer and $r=0$. Let $k\in\Z\setminus\{0\}$ and enumerate $A=\{n_1,n_2,\dots\}$. For each $i\geq 1$, let \[j_i=\big||\sigma_b(n_i)|-|\sigma_b(kn_i)|\big|.\] Let $x\in\R$ be such that \[\sigma_b(x)=0.0^{j_1}\sigma_b(n_1)0^{j_2}\sigma_b(n_2)\dots\] so that $\sigma_b(kx)=0.\CE_b(kA)$ and therefore $\dim_{FS}(\CE_b(kA))=\dim_{FS}(kx)$. By Proposition~\ref{density_zero_lemma}, we have $\dim_{FS}(\CE_b(A))=\dim_{FS}(x)$. If $\dim_{FS}(\CE_b(A))\neq\dim_{FS}(\CE_b(kA))$, then $x$ is a real number such that $\dim_{FS}(x)\neq \dim_{FS}(kx)$ contradicting the main theorem of \cite{doty2007finite}.

    For $q=1/b$, we take $x^\prime\in\R$ to be such that \[\sigma_b(x^\prime)=\sigma_b(n_1^\prime)\sigma_b(n_2^\prime)\dots\] where, for each $i$, $n_i^\prime$ is the nearest multiple of $b$ to $n_i$. By Proposition~\ref{density_zero_lemma}, we have $\dim_{FS}(x)=\dim_{FS}(CE_b(A))$ and $\dim_{FS}(qx)=\dim_{FS}(CE_b(qA))$.

    Exactly similar arguments give the corresponding results for finite-state strong dimension.
\end{proof}

We will now investigate the case for polynomials of degree at least two. To do so we use the following definition due to Besicovitch \cite{besicovitch1935asymptotic}.

\begin{definition}
    A natural number $n$ is $(\epsilon,k)$-normal in base $b\geq 2$ if \[\left|\frac{N(w,\sigma_b(n))}{|\sigma_b(n)|}-\frac{1}{b^k}\right|\leq \epsilon\] for every string $w\in\Sigma_b^k$.
\end{definition}

The following result is well known and appears as Theorem 1 in \cite{pollack2015besicovitch}.

\begin{lemma}\label{constructing-normal}
Consider a sequence $\{a_n\}_{n=1}^\infty$. Suppose that the lengths of the strings $\sigma_b(a_n)$ are growing on average, but that no one length dominates; more precisely, suppose that as $m$ tends to infinity,
    \[m=o\left(\sum_{n=1}^m|\sigma_b(a_n)|\right)\text{ and }m\cdot \max_{1\leq n\leq m}|\sigma_b(a_n)|=O\left(\sum_{n=1}^m|\sigma_b(a_n)|\right).\]
In addition, suppose that for any fixed $\epsilon>0$ and $k\in\N$, almost all $a_n$ are $(\epsilon,k)-$normal, in the sense that the number of $n\leq m$ for which $a_n$ is not $(\epsilon,k)-$normal is $o(m)$ as $m$ tends to infinity, then $\sigma_b(a_1)\sigma_b(a_2)\sigma_b(a_3)\dots$ is normal.
\end{lemma}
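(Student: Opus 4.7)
The plan is to verify the normality condition directly: fix $w \in \Sigma_b^k$ and show that $N(w, T[0\dots N])/N \to b^{-k}$ as $N \to \infty$, where $T = \sigma_b(a_1)\sigma_b(a_2)\dots$. The natural move is to first consider prefixes of $T$ that align with a block boundary, writing $L_m = \sum_{n=1}^m |\sigma_b(a_n)|$, and decompose the sliding-block count
\[N(w, T[0\dots L_m]) = \sum_{n=1}^m N(w, \sigma_b(a_n)) + B_m,\]
where $B_m$ counts occurrences of $w$ that straddle the boundary between consecutive blocks $\sigma_b(a_n)$ and $\sigma_b(a_{n+1})$. Each boundary contributes at most $k-1$ straddling occurrences, so $B_m \leq (k-1)m$, and the hypothesis $m = o(L_m)$ gives $B_m / L_m \to 0$.

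Next I would handle the interior sum. Fix $\epsilon > 0$ and split $\{1,\dots,m\}$ into the set $G_m$ of indices $n$ for which $a_n$ is $(\epsilon,k)$-normal and the exceptional set $E_m = \{1,\dots,m\}\setminus G_m$. By hypothesis $|E_m| = o(m)$. For $n \in G_m$ we have the uniform bound $|N(w,\sigma_b(a_n)) - b^{-k}|\sigma_b(a_n)|| \leq \epsilon\, |\sigma_b(a_n)|$ (up to a $O(k)$ additive error coming from the distinction between sliding block counts inside a block of length $|\sigma_b(a_n)|$), and summing over $G_m$ yields
\[\Bigl|\sum_{n\in G_m} N(w,\sigma_b(a_n)) - b^{-k}\sum_{n\in G_m}|\sigma_b(a_n)|\Bigr| \leq \epsilon L_m + O(m).\]
For the exceptional indices $n \in E_m$, the bound $N(w,\sigma_b(a_n)) \leq |\sigma_b(a_n)|$ gives total contribution at most $|E_m| \cdot \max_{n\leq m}|\sigma_b(a_n)|$. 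Here the second hypothesis is essential: since $m\cdot \max_{n\leq m}|\sigma_b(a_n)| = O(L_m)$ and $|E_m| = o(m)$, this contribution is $o(L_m)$. Combining everything, $N(w,T[0\dots L_m])/L_m$ lies within $\epsilon + o(1)$ of $b^{-k}$, and since $\epsilon$ was arbitrary the boundary-aligned limit is exactly $b^{-k}$.

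Finally I would pass from boundary-aligned prefixes to arbitrary prefixes $T[0\dots N]$. Given $N$, choose $m$ with $L_{m-1} < N \leq L_m$; then $N(w,T[0\dots N])$ differs from $N(w, T[0\dots L_{m-1}])$ by at most $|\sigma_b(a_m)|$, and the ratio changes by at most $|\sigma_b(a_m)|/L_{m-1}$. Since $\max_{n\leq m}|\sigma_b(a_n)| = O(L_m/m)$ and $m/L_m \to 0$, we have $|\sigma_b(a_m)|/L_m \to 0$, and the same holds with $L_{m-1}$ in the denominator. Hence $N(w,T[0\dots N])/N \to b^{-k}$, completing the proof.

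The main obstacle is essentially bookkeeping rather than conceptual: one must keep track of the three small-error sources (the $(k-1)m$ boundary term, the $\epsilon L_m$ interior error from $(\epsilon,k)$-normal blocks, and the $|E_m|\cdot \max|\sigma_b(a_n)|$ term from exceptional blocks) and verify that each is $o(L_m)$. The condition $m\cdot \max_{n\leq m}|\sigma_b(a_n)| = O(L_m)$ is precisely what is needed to kill the exceptional-block contribution, while $m = o(L_m)$ kills both the boundary straddles and the non-aligned prefix adjustment.
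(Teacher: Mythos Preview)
The paper does not actually prove this lemma: it is stated as a known result and attributed to Theorem~1 of \cite{pollack2015besicovitch}. Your argument is correct and is essentially the standard proof one finds in that reference (and in the earlier Copeland--Erd\H{o}s/Besicovitch tradition): split the sliding-block count into interior occurrences and boundary straddles, separate the interior sum into the $(\epsilon,k)$-normal blocks and the $o(m)$ exceptional blocks, and use the two growth hypotheses to kill, respectively, the $O(km)$ boundary term and the $|E_m|\cdot\max_{n\le m}|\sigma_b(a_n)|$ exceptional term. Your final step passing from block-aligned prefixes to arbitrary prefixes is also handled correctly, since $L_m/L_{m-1}\to 1$ follows from $|\sigma_b(a_m)|\le \max_{n\le m}|\sigma_b(a_n)|=O(L_m/m)$.
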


The following results also appear in \cite{pollack2015besicovitch} with the first part due to Lemma 4.7 in \cite{bugeaud2012distribution}.

\begin{lemma}\label{frequent-normality}
    Let $\epsilon >0, k\in \mathbb{N}$, and $b\geq 2$ Then.
    \begin{enumerate}
        \item For every $\epsilon>0$ and $k\in \mathbb{N}$ there is a $\delta$ such that the number of integers $n\in[1,m]$ that are not $(\epsilon,k)$-normal is at most $m^{1-\delta}$ for all sufficiently large $m$.
        \item The number of integers $n\in[0,m]$ for which $n^2$ is not $(\epsilon,k)$-normal is $o(m)$ as $m$ goes to infinity.
    \end{enumerate}
\end{lemma}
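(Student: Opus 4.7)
The plan is to handle the two parts by distinct counting techniques, each reducing the assertion to showing an exceptional set has cardinality $o(m)$. Part~(1) yields to a direct large-deviation argument, while part~(2) requires a second-moment (variance) argument supplemented by exponential-sum estimates for a quadratic phase.

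For part~(1), I would fix $w\in\Sigma_b^k$ and a length $L$, and treat a uniformly random length-$L$ base-$b$ string as i.i.d.\ uniform digits. Decomposing the sliding-block count $N(w,\cdot)$ as a sum over the $k$ residue classes of starting positions modulo $k$, within each class the relevant length-$k$ blocks are disjoint and hence i.i.d.\ Bernoulli$(b^{-k})$; a Chernoff bound then shows that the proportion of length-$L$ strings failing the $(\epsilon,k)$-normality condition for $w$ is at most $C\,e^{-c(\epsilon,k,b)L}$. Union-bounding over the $b^k$ choices of $w$, and summing over $L\leq\lceil\log_b m\rceil$ weighted by the $\Theta(b^L)$ integers of each length, produces a total exceptional count of $O(m^{1-c'})=o(m)$.

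For part~(2), I set $X_w(n)=N(w,\sigma_b(n^2))/|\sigma_b(n^2)|$ and aim to prove
\[\frac{1}{m}\sum_{n=1}^{m}\bigl(X_w(n)-b^{-k}\bigr)^2\longrightarrow 0\qquad (m\to\infty),\]
after which Chebyshev together with a union bound over $w\in\Sigma_b^k$ concludes the proof. Expanding the square, the central quantities to control are the joint hit counts
\[\#\bigl\{n\le m:\ \lfloor n^2/b^i\rfloor\equiv\alpha\!\!\!\pmod{b^k},\ \lfloor n^2/b^j\rfloor\equiv\beta\!\!\!\pmod{b^k}\bigr\}\]
for digit positions $i,j$ in the bulk of $\sigma_b(n^2)$ and residues $\alpha,\beta\in\{0,\dots,b^k-1\}$; positions within a small constant fraction of either endpoint of $\sigma_b(n^2)$ cover only a vanishing share of the blocks and can be absorbed into the error.

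The main obstacle is obtaining this joint equidistribution uniformly over bulk positions and over $n\le m$. Fourier inversion turns each residue condition into exponential sums of the form $\sum_{n\le m}\exp(2\pi i\,t n^2/b^{i+k})$ with integer $t\neq 0$. Because $t/b^{i+k}$ is rational, Weyl's inequality must be used in its rational-approximation form (or one appeals to van der Corput's method), and the denominator $b^{i+k}$ must be tracked carefully against the summation length $m$ to guarantee an $o(m)$ bound throughout the bulk range of $i$. Once this uniform estimate is in hand, averaging over $i,j,\alpha,\beta$, and over $w$ delivers the required variance decay and completes the proof.
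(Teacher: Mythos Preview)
The paper does not supply its own proof of this lemma; it simply attributes the two assertions to the literature (part~(1) to Lemma~4.7 of Bugeaud's monograph, and the whole statement to Pollack's article on Besicovitch's theorem). There is therefore nothing in the paper to compare your argument against beyond those citations.

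That said, your outline is the standard route taken in those sources. For part~(1), the large-deviation count on random base-$b$ strings (split into residue classes modulo $k$ so that the relevant length-$k$ windows become independent Bernoulli trials, then Chernoff, then sum over lengths) is exactly the argument behind Bugeaud's Lemma~4.7; one only has to remark that the leading-digit constraint affects $O(1)$ windows and is harmless. For part~(2), the second-moment scheme you describe is Besicovitch's original method for squares and the Davenport--Erd\H{o}s extension to polynomials: reduce to pairwise equidistribution of digit blocks of $n^2$, Fourier-invert, and estimate the resulting Weyl sums $\sum_{n\le m}e(tn^2/b^{i+k})$. Your identification of the main technical point---getting a saving uniformly over the bulk range of positions $i$, where $b^{i+k}$ may be as large as roughly $m^2$ and hence exceed the summation length---is accurate; in the literature this is handled either by the rational form of Weyl's inequality (reducing modulo $b^{i+k}/\gcd(t,b^{i+k})$) or by a direct interval-counting argument for $\{n^2/b^{i+k}\}$, and the positions within $o(\log m)$ of either end of $\sigma_b(n^2)$ are discarded exactly as you suggest. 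So your plan is correct and is, in substance, the proof the paper is citing.
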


We now show that polynomials of degree higher than 1 can change the finite-state dimension, even with rational coefficients.
\begin{proposition}\label{poly-changes}
    For every $b\geq 2$ there exists $A\subseteq \N$ and a polynomial $p\in\Q[x]$ of degree 2 so that
        \[0<\dim_{FS}(\CE_b(A))<\dim_{FS}(\CE_b(p(A)))<1\]
\end{proposition}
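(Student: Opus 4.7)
The plan is to take $p(x)=x^2$ and construct $A\subseteq\N$ whose elements are structured so that each $a\in A$ contributes many ``prescribed'' digits, while $a^2$ reveals a larger fraction of ``content'' digits. Specifically, let
\[A=\bigcup_{n\geq 1}\{b^{2n}+j\,:\,0\leq j<b^n\},\]
an infinite increasing subset of $\N$. Writing $[j]_n$ for $\sigma_b(j)$ left-padded with zeros to length $n$, the base-$b$ representation of $b^{2n}+j$ is $1\,0^n\,[j]_n$, of length $2n+1$. For $a=b^{2n}+j$ we have $a^2=b^{4n}+2jb^{2n}+j^2$; the bounds $j^2<b^{2n}$ and $2j<b^{n+1}$ ensure the three summands occupy disjoint digit ranges, so
\[\sigma_b(a^2)=1\,0^{n-1}\,[2j]_{n+1}\,[j^2]_{2n}\]
has length $4n+1$. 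Thus in the $n$-th ``batch'' each element of $A$ has $n+1$ prescribed and $n$ content symbols, whereas each element of $p(A)$ has only $n$ prescribed and $3n+1$ content symbols, and $p$ preserves the enumeration order.

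To compute $\dim_{FS}(CE_b(A))$, I first note that the unpadded content $\sigma_b(0)\sigma_b(1)\cdots\sigma_b(b^n-1)\sigma_b(0)\cdots$ (enumerated across all batches) is normal by Lemma~\ref{constructing-normal}: the length hypotheses are routine, and ``almost all terms are $(\epsilon,k)$-normal'' follows from Lemma~\ref{frequent-normality}.1. The padded content $[0]_n[1]_n\cdots[b^n-1]_n$ differs only by padding zeros, which form a density-zero set of positions (since $\sum_{j<b^n}|\sigma_b(j)|\sim nb^n$ while the padded length per batch is $nb^n$, leaving $O(b^n)$ padding); so by Proposition~\ref{density_zero_lemma} the padded content is also dimension $1$ and hence normal. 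Deleting the singleton leading $1$ of each element of $CE_b(A)$ (density $1/(2n+1)\to 0$, preserving dimension by Proposition~\ref{density_zero_lemma}) leaves the padded content with blocks of $n$ zeros inserted between consecutive content segments; Proposition~\ref{zero-blocks} with $z(n)/n\to 1/2$ yields $\dim_{FS}(CE_b(A))=1/2$. The same argument for $CE_b(p(A))$ uses Lemma~\ref{constructing-normal} on the integer sequence interleaving $2j$ and $j^2$ across all batches: the length hypotheses are again routine, and ``almost all terms $(\epsilon,k)$-normal'' follows from Lemma~\ref{frequent-normality}.2 for the $j^2$ terms together with Lemma~\ref{frequent-normality}.1 applied to even integers (which inherit density-one normality as a density-$1/2$ subclass). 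The padded content of $CE_b(p(A))$ again differs only by a density-zero set of padding positions; stripping the leading $1$'s (density $1/(4n+1)\to 0$) reduces $CE_b(p(A))$ to this normal content with blocks of $n-1$ zeros inserted per element, and Proposition~\ref{zero-blocks} with $z(n)/n\to 1/4$ gives $\dim_{FS}(CE_b(p(A)))=3/4$. Since $0<1/2<3/4<1$, the claim follows.

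The main obstacle is the careful verification that the content of $CE_b(p(A))$ is indeed normal. One must show that the total padding in $[2j]_{n+1}$ and $[j^2]_{2n}$ across batch $n$ is $O(b^n)$ (so Proposition~\ref{density_zero_lemma} applies to reduce to the unpadded interleaved concatenation required by Lemma~\ref{constructing-normal}) and verify Lemma~\ref{constructing-normal}'s length hypotheses in spite of the batch resets in the $j$-index. The remaining reductions --- peeling off the density-zero leading $1$'s via Proposition~\ref{density_zero_lemma} and invoking the equivalence between normality and $\dim_{FS}=1$ so that the pure-zero-string form of Proposition~\ref{zero-blocks} applies at each stage --- are straightforward.
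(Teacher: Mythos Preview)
Your proposal is correct and follows essentially the same architecture as the paper: take $p(x)=x^2$, build $A$ from numbers of the shape $b^{\text{power}}+j$, use the binomial expansion to see that squaring roughly doubles the ``content'' portion while the prescribed zero block stays the same size, and finish with Proposition~\ref{zero-blocks} to obtain $\dim_{FS}(CE_b(A))=1/2$ and $\dim_{FS}(CE_b(p(A)))=3/4$.

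The one genuine difference is in how $A$ is populated. The paper selects \emph{one} integer $n$ per length $l$, hand-picked so that $n$, $2n$, and $n^2$ are already $(\epsilon_i,k_i)$-normal; this makes the normality of the stripped ``content'' sequence follow directly from Lemma~\ref{constructing-normal} with no averaging. You instead take \emph{every} $j<b^n$ in batch $n$, which yields a completely explicit $A$ but forces you to appeal to Lemma~\ref{frequent-normality} (to show almost all $2j$ and $j^2$ are $(\epsilon,k)$-normal), to check that the left-padding in $[2j]_{n+1}$ and $[j^2]_{2n}$ is density zero, and to verify the length hypotheses of Lemma~\ref{constructing-normal} survive the batch resets. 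You correctly identify these as the main technical obligations, and they do go through. The trade-off is clear: the paper's route is shorter, yours is constructive.
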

\begin{proof}
    Note that for any $b\geq 2$, $n\in\N$ and $i\geq |\sigma_b(n)|$ we have \[\sigma_b\left((b^i+n)\right)=10^{i-|\sigma_b(n)|}\sigma_b(n).\]

    Let $p(x)=x^2$. We have \[\sigma_b\left((b^i+n)^2\right)=10^{i-|\sigma_b(2n)|}\sigma_b(2n)0^{i-\sigma_b(n^2)}\sigma_b(n^2).\]
    
    Let $\epsilon_i=2^{-i}$ and $k_i=i$. Then by Lemma \ref{frequent-normality}, for each $i$ there is an $l_i\in \mathbb{N}$ for which there is an $n\in \Sigma_b^l$ with $n,2n$ and $n^2$ all $(\epsilon_i,k_i)$-normal for every $l>l_i$.

    We construct $A$ in stages as follows starting with $i=1$ and $l=l_i$.
    \begin{itemize}
        \item Pick an $n\in \Sigma_b^l$ where $n, 2n$ and $n^2$ are $(\epsilon_i,k_i)$-normal and add $b^l+n$ to $A$. Set $l=l+1$.
        \item Once $l=l_{i+1}$ set $i=i+1$ and go back to the first step.
    \end{itemize}

    Now consider the sequence $S$ formed by stripping the zero block corresponding to each $b^l$ in $\CE_b(A)$. Similarly let $T$ be the sequence obtained by removing the zero block corresponding to the binomial expansion in $\CE_b(p(A))$. By Lemma \ref{constructing-normal}, both $S$ and $T$ are normal. By Proposition \ref{zero-blocks} we have $\dim_{FS}(\CE_b(A))=\Dim_{FS}(\CE_b(A))=0.5$ and $\dim_{FS}(\CE_b(p(A)))=\Dim_{FS}(\CE_b(p(A)))=0.75$
    as $lim_{n\to\infty}\frac{z(n)}{n}$ is $0.5$ for $\CE_b(A)$ and $0.25$ for $\CE_b(p(A))$.
    
\end{proof}

\begin{corollary}
    For all $d\geq 2$, there exists $A\subseteq\N$ and $p\in\Q[x]$ so \[0<\dim_{FS}(\CE_{b}(A))<\dim_{FS}(\CE_{b}(p(A)))<1\]
\end{corollary}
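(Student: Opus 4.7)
The plan is to extend the construction of Proposition~\ref{poly-changes} from degree two to arbitrary $d\geq 2$ by taking $p(x)=x^d$. For any integer $n$ with exactly $l$ base-ten digits and any $i\geq dl+C_d$ (where $C_d$ is a constant absorbing the base-ten sizes of the binomial coefficients $\binom{d}{k}$), the expansion
\[
(10^i+n)^d=\sum_{k=0}^d\binom{d}{k}n^k\cdot 10^{i(d-k)}
\]
satisfies $|\sigma_{10}(\binom{d}{k}n^k)|\leq kl+O(1)<i$ for every $k$, so the $d+1$ summands occupy disjoint digit windows and produce no carries. Consequently
\[
\sigma_{10}((10^i+n)^d)=1\cdot 0^{e_1}\sigma_{10}(\textstyle\binom{d}{1}n)\cdot 0^{e_2}\sigma_{10}(\textstyle\binom{d}{2}n^2)\cdots 0^{e_d}\sigma_{10}(n^d),
\]
with interior zero-block lengths $e_k=i-|\sigma_{10}(\binom{d}{k}n^k)|$, directly generalizing the degree-two expansion used in Proposition~\ref{poly-changes}.

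First I would establish the higher-degree analogue of Lemma~\ref{frequent-normality}(2): for every fixed $d\geq 1$, $\epsilon>0$, and $k\in\N$, the number of $n\in[0,m]$ for which some $\binom{d}{j}n^j$ with $1\leq j\leq d$ fails to be $(\epsilon,k)$-normal is $o(m)$. The cases $j=1,2$ are Lemma~\ref{frequent-normality}, and the remaining cases follow from the Weyl-equidistribution arguments in \cite{pollack2015besicovitch,bugeaud2012distribution}, with the harmless observation that multiplying by the fixed constant $\binom{d}{j}$ perturbs digit frequencies by $o(1)$ in $l$. A union bound over $j\leq d$ then furnishes, for each sufficiently large $l$, an integer $n\in[10^{l-1},10^l)$ for which every $\binom{d}{j}n^j$ is simultaneously $(\epsilon,k)$-normal.

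Mirroring the staged construction in Proposition~\ref{poly-changes}, I would set $\epsilon_i=2^{-i}$, $k_i=i$, choose an increasing sequence $l_i$ witnessing the previous step at that tolerance, and for each $l\in(l_i,l_{i+1}]$ select a suitable $n$ and place $10^{dl+C_d}+n$ into $A$. Stripping the inserted zero blocks from $CE_{10}(A)$ (respectively $CE_{10}(p(A))$) produces a concatenation of $(\epsilon_i,k_i)$-normal pieces whose lengths satisfy the growth hypotheses of Lemma~\ref{constructing-normal}; the occasional leading ``$1$'' digits form a density-zero index set, harmless by Proposition~\ref{density_zero_lemma}. Both stripped sequences are thus normal, so Proposition~\ref{zero-blocks} applies directly: each block $\sigma_{10}(10^{dl+C_d}+n)$ contributes $(d-1)l+O(1)$ inserted zeros out of $dl+O(1)$ total digits, while each $\sigma_{10}((10^{dl+C_d}+n)^d)$ contributes $\tfrac{d(d-1)}{2}l+O(1)$ inserted zeros out of $d^2l+O(1)$ total digits. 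In the limit
\[
\dim_{FS}(CE_{10}(A))=\tfrac{1}{d},\qquad \dim_{FS}(CE_{10}(p(A)))=\tfrac{d+1}{2d},
\]
and the strict chain $0<1/d<(d+1)/(2d)<1$ is equivalent to $d>1$, hence holds for all $d\geq 2$.

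The hard part is the polynomial-values normality input in the second paragraph: Lemma~\ref{frequent-normality} is written only for $n$ and $n^2$, so upgrading it to $\binom{d}{j}n^j$ for all $j\leq d$ is the one genuinely new ingredient and must be either cited carefully from \cite{pollack2015besicovitch} or reproved along the same Weyl-sum lines. The remaining details are routine bookkeeping: the $O(1)$ corrections from $C_d$, the leading ``$1$'' digits, and the selection constraint $n\geq 10^{l-1}$ all disappear in the limit of ratios whose numerator and denominator are both $\Theta(l)$.
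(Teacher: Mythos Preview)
Your proposal is correct and follows essentially the same approach as the paper: expand $(b^i+n)^d$ via the binomial theorem into non-overlapping digit windows, invoke the fact that almost all $n$ yield $(\epsilon,k)$-normal values of each $\binom{d}{j}n^j$, and then rerun the staged construction of Proposition~\ref{poly-changes} together with Proposition~\ref{zero-blocks}. The only notable difference is the source of the polynomial-values normality input: the paper cites Theorem~2 of Davenport--Erd\H{o}s directly (which already covers the integer-valued polynomials $\binom{d}{j}x^j$ in base ten and makes your ``multiplying by $\binom{d}{j}$ perturbs frequencies'' remark unnecessary), whereas you appeal to the Weyl-sum arguments in \cite{pollack2015besicovitch,bugeaud2012distribution}; your explicit choice $i=dl+C_d$ and the resulting computed dimensions $1/d$ and $(d+1)/(2d)$ are a welcome refinement over the paper's terse sketch.
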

\begin{proof}
    By the binomial theorem, we have for any $i\geq {d\choose \lfloor d/2\rfloor}$, the string $\sigma_b\left((b^i+n)^d\right)$ is given by \[10^{i-\left|\sigma_b\left({d\choose 1}n\right)\right|}\sigma_b\left({d\choose 1}n\right)0^{i-\left|\sigma_b\left({d\choose 2}n^2\right)\right|}\sigma_b\left({d\choose 2}n^2\right)\dots0^{i-\left|\sigma_b\left({d\choose d}n^d\right)\right|}\sigma_b\left({d\choose d}n^d\right).\]

    Theorem 2 of the original work of Davenport and Erd\H{o}s \cite{davenport1952note} states that for any integer-valued polynomial $p(x)$, the number of numbers $n\leq m$ for which $f(n)$ is not $(\epsilon,k)-$normal in base 10 is $O(m)$ as $m\to\infty$. A simple generalization of the construction used in the preceding proposition to the expression given by the binomial theorem and the observation that the result of Davenport and Erd\H{o}s holds in every positive integer base $b\geq 2$ proves the corollary.
\end{proof}

While the above results show rational polynomials can change the dimension, the following shows that this is not always the case.

\begin{proposition}
    For every $s\in [0,1]$, $d\in\mathbb{N}$ and every $b\geq 2$ there is a set $A$ with \[\dim_{FS}(\CE_b(A))=\Dim_{FS}(\CE_b(A))=\dim_{FS}(\CE_b(p(A)))=\Dim_{FS}(\CE_b(p(A)))=s\]
    for a rational polynomial of degree $d$.
\end{proposition}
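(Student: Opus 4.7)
The plan is to construct $A$ so that both $CE_b(A)$ and $CE_b(p(A))$ look like normal sequences with proportional amounts of zero padding inserted between nearly-normal blocks. The crucial observation is that if $a = n\cdot b^m$, then $\sigma_b(a) = \sigma_b(n)\,0^m$ and, since $a^2 = n^2\cdot b^{2m}$, $\sigma_b(a^2) = \sigma_b(n^2)\,0^{2m}$; because $|\sigma_b(n^2)| \in \{2|\sigma_b(n)| - 1,\, 2|\sigma_b(n)|\}$, the fraction of zero padding in $\sigma_b(a^2)$ asymptotically matches that in $\sigma_b(a)$. The boundary cases $s=1$ (take $A=\mathbb{N}$, so $CE_b(A)$ is the Champernowne sequence and $CE_b(p(A))$ is normal by Besicovitch/Davenport-Erd\H{o}s) and $s=0$ (take $A=\{b^i\}_{i\geq 0}$, in which case the density of nonzero symbols in both sequences tends to $0$) are immediate, so the work is in the case $s \in (0,1)$.

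For $s \in (0,1)$, I would pick a strictly increasing sequence of lengths $\ell_i$ growing fast enough that $\ell_i = o(\sum_{j\leq i}\ell_j)$ and such that Lemma~\ref{frequent-normality} allows selecting, for each $i$, an integer $n_i$ with $|\sigma_b(n_i)| = \ell_i$ for which both $n_i$ and $n_i^2$ are $(2^{-i},\, i)$-normal in base $b$. Set $m_i = \lfloor \ell_i(1-s)/s\rfloor$, arranged so that $\ell_i + m_i$ is strictly increasing, and put $a_i = n_i\cdot b^{m_i}$, $A = \{a_i\}$. Then $a_1 < a_2 < \cdots$, and
\[CE_b(A) = \sigma_b(n_1)\,0^{m_1}\,\sigma_b(n_2)\,0^{m_2}\cdots,\qquad CE_b(p(A)) = \sigma_b(n_1^2)\,0^{2m_1}\,\sigma_b(n_2^2)\,0^{2m_2}\cdots.\]

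To finish, I would apply Lemma~\ref{constructing-normal} to each of $(n_i)$ and $(n_i^2)$: the length-growth hypotheses hold by the choice of $\ell_i$, and $(\epsilon,k)$-normality holds for all but finitely many terms since $2^{-i}\to 0$ and $i\to\infty$. Thus the zero-stripped sequences $\sigma_b(n_1)\sigma_b(n_2)\cdots$ and $\sigma_b(n_1^2)\sigma_b(n_2^2)\cdots$ are both normal. Since at most $o(N)$ zero blocks have been inserted within each prefix of length $N$ (because $\ell_i + m_i \to \infty$), Proposition~\ref{zero-blocks} applies, and a direct computation yields $z(N)/N \to 1 - s$ in both cases, using $|\sigma_b(n_i^2)|/(2\ell_i)\to 1$ for the $p(A)$ case. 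This gives $\dim_{FS} = \Dim_{FS} = s$ for both sequences. The main obstacle will be ensuring actual convergence of $z(N)/N$ to $1-s$ (rather than only equality of its $\liminf$ and $\limsup$ to $1-s$), which requires choosing $\ell_i$ and $m_i$ to grow smoothly enough that the partial ratios stabilize uniformly despite the integer rounding in $m_i$ and the off-by-one in $|\sigma_b(n_i^2)|$.
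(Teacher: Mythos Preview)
Your approach is essentially identical to the paper's: pick integers $n$ with both $n$ and $n^2$ increasingly $(\epsilon,k)$-normal (via Lemma~\ref{frequent-normality}), pad with trailing zeros by multiplying by a power of $b$, apply Lemma~\ref{constructing-normal} to the zero-stripped sequences, and read off the dimension from Proposition~\ref{zero-blocks}. The paper resolves the convergence-of-$z(N)/N$ issue you flagged by parametrizing block lengths via rationals $c_n/d_n\to s$ with $d_{n+1}-d_n$ uniformly bounded, which forces the total block lengths to grow only linearly and makes the density limit immediate; note also that Lemma~\ref{constructing-normal} requires $i\,\ell_i = O\bigl(\sum_{j\le i}\ell_j\bigr)$ rather than merely $\ell_i = o\bigl(\sum_{j\le i}\ell_j\bigr)$, so your $\ell_i$ must grow \emph{slowly} (e.g.\ linearly), not ``fast enough.''
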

\begin{proof}
    Set $p(x)=x^d$ and let $(\frac{c_n}{d_n})_{n\in \mathbb{N}}$ where $c_n,d_n\in \mathbb{N}$ be a sequence of rationals converging to $s$ with the property that $d_{n+1}-d_n\leq a$ for some fixed constant $a$ and $lim_{n\to \infty} d_n=lim_{n\to \infty} c_n=\infty$. Note that we do not require $\frac{c_n}{d_n}$ to be written in lowest terms.

    We then construct a set $A$ where $\epsilon_i,k_i$ and $l_i$ are defined in the proof of \ref{poly-changes}. Start with $n=0$
    \begin{enumerate}
        \item Let $i\in \mathbb{N}$ be the largest satisfying $l_i<c_n$.
        \item Pick an $m\in \Sigma_b^{c_n}$ with $m$ and $m^d$ $(\epsilon_i,k_i)$-normal and add $m(b^{(d_n-c_n)})$ to $A$. 
        \item Set $n=n+1$ and go back to step 1.
    \end{enumerate}
    Since $\lim_{n\to\infty}c_n=\infty$ we can remove the $(d_n-c_n)$ 0's at the end of each integer in $\CE_b(A)$ to get a normal sequence by Lemma \ref{constructing-normal} and similarly for removing the $d(d_n-c_n)$ 0's from $\CE_b(p(A))$. Since $d_{n+1}-d_n\leq a$ the limit of $z(n)$ behaves nicely with
    \[\lim_{n\to\infty}\frac{z(n)}{n}=\lim_{n\to\infty}\frac{d_n-c_n}{d_n}\] and moreover 
    \[\lim_{n\to \infty}1-\frac{d_n-c_n}{d_n}=\frac{c_n}{d_n}=s.\]
    Hence, by Proposition \ref{zero-blocks} we have \[\dim_{FS}(\CE_b(A))=\Dim_{FS}(\CE_b(A))=\dim_{FS}(\CE_b(p(A)))=\Dim_{FS}(\CE_b(p(A)))=s.\]
\end{proof}

\begin{remark}
For $0\leq s\leq t\leq 1$ it is also possible to create a set $A$ with \[\dim_{FS}(\CE_b(A))=\dim_{FS}(\CE_b(p(A)))=s\] and \[\Dim_{FS}(\CE_b(A))=\Dim_{FS}(\CE_b(p(A)))=t\] in the same way using one sequence of rationals corresponding to $s$ and another for $t$. Then alternating from padding based on the sequence corresponding to $s$ after the frequency of zeros is arbitrarily close to the correct value to padding according to $t$ and vice versa one can get the $\liminf$ and $\limsup$ to correspond to the correct values.
\end{remark}

The following natural question arises from \cite{szusz1994combinatorial} (Theorem 5): does the normality of $\CE_b(A)$ imply that $\CE_b(p(A))$ is normal? We now answer this question in the negative by demonstrating the existence of a set $A$ such that $A\subseteq\N$ yields a dimension-1 sequence, but $p(A)$ has dimension strictly less than 1. The proof requires a few number-theoretic facts which we present in the appendix.

\begin{lemma}\label{squaredecrease}\footnote{The proof of this lemma is located in the appendix.}
    Let $\epsilon>0, k,b,d\in\mathbb{N}$ with $b,d\geq 2$ and $\gcd(d,b)=1$. Then there exists $N,c\in\mathbb{N}$ and $p\in \mathbb{Q}[x]$ of degree $d$ such that for any $n\geq N$ there exists $m\in \mathbb{N}$  with
    \begin{enumerate}
        \item $|\sigma_b(m)|\geq 2n-\log(n)-c$
        \item $\sigma_b(m)$ $(\epsilon,k)$-normal
        \item  $0^{\frac{n}{2}-c}$ is a substring of $\sigma_b(p(m))$.
    \end{enumerate}
\end{lemma}

\begin{theorem}
    For every $b,d \geq2$ with $\gcd(d,b)=1$, there exists $A\subseteq \mathbb{N}$ $p\in \mathbb{Q}[x]$ of degree $d$ such that 
    \[1=\dim_{FS}(\CE_b(A))>\dim_{FS}(\CE_b(p(A)))\]
    and 
    \[1=\Dim_{FS}(\CE_b(A))>\Dim_{FS}(\CE_b(p(A))).\]
\end{theorem}
\begin{proof}
    It is straightforward to construct a sequence of numbers $\{a_n\}_{n=1}^\infty$ that satisfies Lemma \ref{squaredecrease} for a sequence of $\epsilon_i$ going to 0 and $k_i$ going to infinity in the style of Proposition \ref{poly-changes}. By Lemma \ref{constructing-normal}, the corresponding set $A$ has $1=\dim_{FS}(\CE_b(A))=\Dim_{FS}(\CE_b(A))$. It is also straightforward to see that the length of the numbers in the sequence can be chosen to grow slowly enough that the blocks of approximately $\frac{n}{2}$ zeros in the outputs $p(m)$ of length approximately $2dn$ cause there resulting sequence to consist of $\frac{1}{4d}$ zero blocks as a fraction of the sequence length in the limit. For large enough $l$ this implies that the $l$-block entropy will be less than 1 for sufficiently large $n$ as the probability of the string $0^l$ appearing is larger than $\frac{1}{b^l}$. Thus, 
    \[1>\Dim_{FS}(\CE_b(p(A)))\geq \dim_{FS}(\CE_b(p(A))).\]
\end{proof}

We conclude this section by demonstrating the existence of a sequence of finite-state dimension zero whose finite-state dimension becomes positive under a rational polynomial transformation.

\begin{proposition}
    For all $b\geq 2$, there exists $A\subseteq\N$ and a $p\in\R[x]$ such that $\dim_{FS}(\CE_b(A))=0$ and $\dim_{FS}(\CE_b(p(A))>0$.
\end{proposition}
\begin{proof}
We will employ the use of the polynomial $p(x)=x^2$. First note that \[(1+x+x^2+\dots+x^n)^2=1+2x+3x+\dots+(n+1)x^n+nx^{n+1}+\dots+2x^{2n-1}+x^{2n}\] and thus, if $\sigma_b(n)=(10^i)^j1$ (where exponentiation refers to concatenation), then $\sigma_b(n^2)$ has the form \[10^{\ell_1}\sigma_b(1)0^{\ell_2}\sigma_b(2)\dots0^{\ell_{j+1}}\sigma_b(j+1)0^{\ell_j}\sigma_b(j)\dots0^{\ell_2}\sigma_b(2)0^{\ell_1}\sigma_b(1)\] where $\ell_k=i+1-\sigma_b(k)$.  For example, in base 10, \[1000010000100001^2=1000020000300004000030000200001.\]
Note that if $A=\{n | (10^i)^j1=\sigma_b(n), n\in S\subseteq\N\}$ for some (infinite) $S\subseteq N$, then $\dim_{FS}(CE_b(A))=0$. Let $A$ be the set of numbers $a_k$ such that $\sigma_b(a_i)$ has the form $(10^i)^j1$ and $j$ is such that all numbers with base $b$ expressions of length $i$ appear in $\sigma_b(a_i)$. For each $i$, write $\sigma_b(a_i)=x_i^Ly_ix_i^R$ where $y_i$ is the portion of $\sigma_b(a_i)$ consisting of numbers whose base $b$ expressions have length $i$ and which appear in increasing order in $\sigma_b(a_i)$. We have $|\sigma_b(a_i)|=2i(b^i-2)+i$ and $|x_i^L|=|x_i^R|=i(b^{i-1}-1)$. Calculating the relative fraction of the lengths of strings $\sigma_b(a_i)$ consisting of the substrings $y_i$, we see that \[\lim_{i\to\infty}\frac{(b-1)b^{i-1}i}{2i(b^i-2)+i}=\frac{b-1}{2b}>0.\] By Corollary \ref{garbage-blocks}, the dimension of $\CE_b(p(A))$ is positive.

\end{proof}

\section{Zeta Dimension}

The {\it zeta-dimension} of a set $A\subseteq\N$, \[\Dim_\zeta(A)=\inf\{s|\zeta_A(s)<\infty\}\] where $\zeta_A(s)=\sum_{n\in A}\frac{1}{n^{s}}$, is well-known to have the property that $\Dim_\zeta(A)=1$ implies $CE_b(A)$ is normal. This notion of dimension admits an additional ``entropy characterization" given by \[\Dim_\zeta(A)=\limsup_{n\to\infty}\frac{\log|A\cap\{1,\dots,n\}|}{\log n}\] due to Cahen \cite{cahen1894fonction}. As noted in \cite{gu2005dimensions}, it is natural to define the {\it lower} zeta-dimension, $\dim_\zeta(A)$ by replacing the limit superior with a limit inferior. In \cite{gu2005dimensions}, it is shown that $\dim_{FS}(CE_b(A))\geq\dim_{\zeta}(A)$ and $\Dim_{FS}(CE_b(A))\geq\Dim_\zeta(A)$, and furthermore that for any 
\[
\begin{array}{c c c c c c c}
        &\gamma & \leq & \delta & \leq & 1       \\
        & \rotatebox[origin=c]{90}{\(\leq\)}    &   & \rotatebox[origin=c]{90}{\(\leq\)}              &               \\
       0 \hspace{1em} \leq  & \alpha & \leq & \beta.       &       \\
\end{array}
\]
there exists a set $A\subseteq\N$ with $\dim_\zeta(A)=\alpha$, $\Dim_\zeta(A)=\beta$, $\dim_{FS}(\CE_b(A))=\gamma$ and $\Dim_{FS}(\CE_b(A))=\delta$. The following proposition describes the behavior of both zeta-dimensions under polynomial transformations.

\begin{proposition}
    For all $A\subseteq\N$ and $p\in\R[x]$ with $\deg(p)=d$, \[\dim_\zeta(p(A))=\frac{1}{d}\dim_\zeta(A)\hspace{1em}\text{and}\hspace{1em}\Dim_\zeta(p(A))=\frac{1}{d}\Dim_\zeta(A).\]
\end{proposition}
\begin{proof}
    By the series characterization of zeta dimension, we have \[\Dim_\zeta(p(A))=\inf\{s | \zeta_{p(A)}(s)<\infty\}.\] Noting that, for large $n$, $p(n)\sim Cn^d$ for some $C$, we observe that the above infimum is precisely $d^{-1}\Dim_\zeta(A)$. To show the corresponding result for lower zeta dimension, we utilize the entropy characterization. First note that without loss of generality, $|p(A)\cap\{p(1),\dots,p(n)\}|=|A\cap\{1,\dots,n\}|$ for sufficiently large $n$. Again noting that $p(x)\sim Cn^d$ for large $n$, we have
    \[\dim_\zeta(A)=\liminf_{n\to\infty}\frac{\log(|A\cap\{1,\dots,n\}|)}{\log{n}}=\liminf_{n\to\infty}d\frac{\log(|p(A)\cap\{p(1),\dots,p(n)\}|)}{\log(Cn^d)}.\]
    Now,
    \[\limsup_{n\to\infty}\left|\frac{\log(|p(A)\cap\{p(1),\dots,p(n)\}|)}{\log(Cn^d)}-\frac{\log(|p(A)\cap\{p(1),\dots,p(n+1)-1\}|)}{\log(C(n+1)^d)}\right|=0\] and thus $\dim_\zeta(p(A))=d^{-1}\dim_\zeta(A)$ as desired.
\end{proof}

\section{Future Work}

We view this work as a step toward understanding the way in which the Davenport-Erd\H{o}s theorem is the dimension-1 special case of a more interesting theorem about finite-state dimension. Remaining open questions include: analysis of bounds on $|\dim_{FS}(CE_b(A))-\dim_{FS}(CE_b(p(A)))|$ for polynomials of degree $d\geq 2$, consideration of non-polynomial functions (as in \cite{de2011problem,madritsch2008normality,pollack2015some,szusz1994combinatorial}) in our context, and extension to arbitrary pseudo-polynomials (as in \cite{nakai1990class}).



\printbibliography

\appendix

\section{Proof of Lemma \ref{allprefixesproposition}}

\begin{proof}
    Let $l\in\mathbb{N}$. It suffices to show that 
    \[\liminf_{n\to \infty} H_l(S[0...n]) \leq \liminf_{n\to \infty} H_l(T[0...n]).\]
    To see this, let $x=\liminf_{n\to \infty} H_l(S[0...n])$. If $x=0$ then we are done, otherwise let  $x>\epsilon>0$. Then there is $N\in \mathbb{N}$ such that $H_l(S[0...n])>x-\frac{\epsilon}{2}$ for all $n\geq N$. Let $1>p=\frac{x-\epsilon}{x-\frac{\epsilon}{2}}>0$. We now claim that for all $m\in \mathbb{N}$ with $\sum_{n=N}^m1\geq \frac{N^2p}{1-p}$ and all $i\in \mathbb{N}$ with $i\leq m$ that  \[H_l(S[0]S[0...1]...S[0...m]S[0...i]\geq x-\epsilon\] 
    and hence the proposition is true. First note that 
    \[H_l(S[0...N]), ..., H_l(S[0...m])\geq x-\frac{\epsilon}{2}\] 
    and by repeated use of Proposition \ref{concavity} for $H_l$ we get that
    \[H_l(S[0...N]S[0...N+1]....S[0...m])\geq x- \frac{\epsilon}{2}.\]
    One more use of concavity yields
     \begin{align}
        H_l(S[0]...S[0...m]S[0...i])&\geq \frac{0+(H_l(S[0...N]...S[0...m])(|S[0...N]...S[0...m]|)}{|(S[0]...S[0...m]S[0...i]|} \nonumber \\
        &\geq \frac{(x-\frac{\epsilon}{2})(\sum_{n=N}^mn)}{i+\sum_{n=1}^mn} \nonumber \\
        &\geq \frac{(x-\frac{\epsilon}{2})(\frac{N^2p}{1-p})}{N^2+(\frac{N^2p}{1-p})} \nonumber \\
        &=(x-\frac{\epsilon}{2})\frac{(\frac{N^2p}{1-p})}{N^2+(\frac{N^2p}{1-p})}\nonumber \\
        &= (x-\frac{\epsilon}{2})p \nonumber \\
        &= (x-\frac{\epsilon}{2})\frac{(x-\epsilon)}{(x-\frac{\epsilon}{2})} \nonumber \\
        &=x-\epsilon. \nonumber
    \end{align}
\end{proof}

\section{Proof of Lemma \ref{squaredecrease}}

\begin{observation}\label{generators}
    For every base $b\geq 2$ there is a positive constant $c_b$ such that the set $G_n=\{m\in[0,b^n]:\gcd(m,b^n)=1\}$ satisfies 
    \[|G_n|\geq c_bb^n\]
    for all $n\in \mathbb{N}$
\end{observation}
\begin{proof}
    We have 
    \[|G_n|=\varphi(b^n)=b^n\prod_{p|b^n}\left(1-\frac{1}{p}\right)=b^n\prod_{p|b}\left(1-\frac{1}{p}\right)\]
    where $\varphi$ denotes Euler's totient function and $p$ ranges over the primes. Letting 
    \[c_b=\prod_{p|b}\left(1-\frac{1}{p}\right)\]
    the observation holds.
\end{proof}

We will use the following result on the gcd-sum function 
\[P(n)=\sum_{k=1}^n\gcd(k,n)\]
\begin{lemma}\cite{toth2010survey}\label{gcdsum}
    Let $\tau(n)$ be the number of divisors of $n$. Then
    \[P(n)\leq n\tau(n)\]
    for all $n\in\mathbb{N}$.
\end{lemma}

\begin{proposition}\label{collisions}
    If a set $S=\{v_1,\dots v_n\}$ of $n$ values has $C$ pairs $\{v_i,v_j\}$ where $v_i=v_j$, then there are at least 
    \[\frac{n^2}{2C+n}\]
    unique values in $S$.
\end{proposition}
\begin{proof}
    Let $k$ be the number of unique values and $S_1,\dots S_k$ be the subsets of $S$ where every element in the subset has the same value. It is routine to verify that the minimum value of $k$ occurs when the number of collisions and hence the size of each $S_i$ is the same.

    Therefore to obtain a lower bound assume that there are $k$ sets each with $\frac{n}{k}$ elements and $\frac{C}{k}$ collisions. For this to be true we need 
    \[{\frac{n}{k} \choose 2}=\frac{C}{k}\]
    and solving for $k$ yields 
    $k=\frac{n^2}{2C+n}$.
\end{proof}
We now prove Lemma \ref{squaredecrease}.
\begin{proof}
    We will consider numbers $m=b^nx+y$ divided in two parts so that $\sigma_b(m)=\sigma_b(x)\sigma_b(y)$ and use the polynomial $p(m)=m^d$. By expanding the formula for $m^d=(b^nx+y)^d$ we see that the $n$ least significant digits depend only on $y^d$. The next $n$ significant digits are the result of the addition of a middle portion of $y^d$ and $dxy^{d-1}$. We will show that for a sufficiently large portion of possible $y's$, there is a large portion of $x's$ so that both $x$ and $y$ are $(\epsilon,k)$-normal and the left of this block is all $0$'s. 

    We will consider $y$ from the set $G_n$ defined in Observation \ref{generators}. Let $\hat{y}$ be the $n$ bits of $y^d$ so that $\sigma(\hat{y}+dxy^{d-1}$ mod $b^n)$ appears in $\sigma_b(b^nx+y)^d$ ignoring the potential carry. Then to have the left half to the digits of $p(m)$ inside this block be 0 excluding a constant amount that may be altered by the carry, we need 
    \[\hat{y}+dxy^{d-1}\equiv t \mod b^n \]
    for some $t\in[0,b^{n/2})$. Note that $G_n$ consists of the multiplicative group of $(\mathbb{Z}/b^n\mathbb{Z})^\times$ so $(y^{d-1})^{-1}$ is well defined and also in $G_n$. Therefore we get
    \[dx_{y,t}\coloneq dx\equiv -\hat{y}(y^{d-1})^{-1}+t(y^{d-1})^{-1}\mod b^n. \] 
    Note that since $\gcd(d,b)=1$ there is a unique solution $x\in[0,b^{n/2})$. 
    Let $c_y\coloneq -\hat{y}(y^{d-1})^{-1}d^{-1}$ and $g_y\coloneq (y^{d-1})^{-1}d^{-1}$ so that $x_{y,t}=c_y+tg_y \mod b^n$ and let  $x_{y,t}=l_{y,t}b^{n/2}+r_{y,t}$ be the splitting so that $l_{y,t},r_{y,t}\in [0,b^{n/2})$. Define $L_y=\{l_{y,t}:t\in[0,b^{n/2})\}$ and $R_y=\{r_{y,t}:t\in[0,b^{n/2})\}$.

    \textbf{Claim:}
        There are constants $c,a>0$ such that at least $\frac{|G_n|}{2}$ of the values $y\in G_n$ have $|L_y|\geq \frac{cb^{n/2}}{n^a}$ for sufficiently large $n$.

    Suppose the claim is true. Then note that $r_{y,t}=c_y+tg_y \mod b^{n/2}$ and moreover that since $y\in G_n$ so is $(y^{d-1})$ along with its inverse $(y^{d-1})^{-1}$ and $g_y$. Thus, we also have $\gcd(g_y,b^{n/2})=1$ so as $t$ ranges over $[0,b^{n/2})$ we get all possible values in $R_y$ and in particular no two values of $t$ have the same corresponding $r_{y,t}$. 

    By Lemma \ref{frequent-normality} we can choose an $(\epsilon,k)$-normal $y$ satisfying the above claim for sufficiently large $n$. For that $y$, $L_y$ contains over twice as many numbers as the amount of non $(\epsilon,k)$-normal numbers, so after removing the bad ones we still have more than the amount of bad ones meaning one of those $l_{y,t}$ coincide with an $(\epsilon,k)$-normal $r_{y,t}$.

    Concatenating these three $(\epsilon,k)$-normal strings yields an $(\epsilon,k)$-normal $m$ of length $2n$ with potentially leading zeros. 

    Finally note that there are $\Omega(\frac{b^{n/2}}{n^a})$ acceptable choices for the element in $L_y$ Therefore we need approximately $\log_b(\frac{b^{n/2}}{n^a})=\frac{n}{2}-\frac{a}{\log(b)}\log(n)$ bits to represent the largest which can then have at most $O(\log(n))$ leading zeros.

    \textbf{Proof of Claim:} 
        We start by counting the number of total collision triples $\{y,t_1,t_2\}$  where $l_{y,t_1}=l_{y,t_2}$ with $t_1>t_2$. Note that if $l_{y,t_1}=l_{y,t_2}$ then $x_{y,t_1}-x_{y,t_2}=r_{y,t_1}-r_{y,t_2}$. Moreover, $x_{y,t_1}-x_{y,t_1}= (t_1-t_2)g_y \mod b^n$. Thus, letting $s=t_1-t_2$ and $r=r_{y,t_1}-r_{y,t_2}$, if there is a collision then 
        \[sg_y \equiv r \mod b^{n}\]
        for some $r$ with $|r|<b^{n/2}$.
        Recall that this has either 0 or $\gcd(s,b^n)$ solutions. Thus, if we fix both $r$ and $s$ there are at most $\gcd(s,b^n)$ solutions $g_y$.
        As there are less than $2b^{n/2}$ choices for $r$ we get a maximum of $2b^{n/2}\gcd(s,b^n)$ solutions over all possible $r$ and fixed $s$.

        Now note that there are also at most $b^{n/2}$ possibilites of for $t_1$ and $t_2$ with $t_1-t_2=s$. Thus, the total number of collisions is at most  
        \[C=\sum_{s=1}^{b^{n/2}}2b^n\gcd(s,b^n).\]
        By Lemma \ref{gcdsum} we have 
        \[C\leq 2b^{3n/2}\tau(b^{n/2}).\]

        By Observation \ref{generators}, there are at least $c_bb^n$ values of $g_y$ since if $y_1\neq y_2$ then $g_{y_q}\neq g_{y_2}$, so the average amount of collisions for a $g_y$ is at most
        \[\frac{2b^{3n/2}\tau(b^{n/2})}{c_bb^n}=\frac{2b^{n/2}\tau(b^{n/2})}{c_b}.\]
        Thus, note that for at least half of the $g_y$, the amount of collisions is at most 
        \[C'=\frac{4b^{n/2}\tau(b^{n/2})}{c_b}.\]

        By Proposition  \ref{collisions}, for each of these $y$ we have
        \[|L_y|\geq \frac{b^n}{2\frac{4b^{n/2}\tau(b^{n/2})}{c_b}+b^{n/2}}=\frac{b^{n/2}}{\frac{8\tau(b^{n/2})}{c_b}+1}\geq \frac{c_bb^{n/2}}{9\tau(b^{n/2})}\]

        Lastly note that that if $b$ has prime factorization $b=p_1^{c_1}\dots p_k^{c_k}$ then $\tau(b^n)=(c_1n+1)\cdot(c_2n+1)\dots(c_kn+1)$ and in particular $9\tau(b^{n/2})$ is bounded by $n^a$ for some fixed $a$ and sufficiently large $n$.
\end{proof}

\end{document}